\pdfoutput=1
\documentclass[12pt, oneside]{amsart}

\usepackage{hyperref}
\hypersetup{%
	pdfpagemode={UseOutlines},
	bookmarksopen,
	pdfstartview={FitH},
	colorlinks,
	linkcolor={blue},
	citecolor={blue},
	urlcolor={blue}
}

\usepackage{geometry} 

\usepackage{dcolumn}
\newcolumntype{d}{D{.}{.}{-1} } 

\usepackage{amsmath, amssymb, amsthm}
\usepackage{mathrsfs}
\usepackage{enumitem}
\usepackage{aligned-overset}
\usepackage{mathtools}
\usepackage[numbers]{natbib}
\usepackage{bclogo}

\numberwithin{equation}{section}

\newtheorem{teo}{Theorem}[section]

\newtheorem{lem}[teo]{Lemma}

\theoremstyle{remark}
\newtheorem{remark}[teo]{Remark}

\newcommand{\R}{\mathbb{R}} 
\newcommand{\N}{\mathbb{N}} 
\renewcommand{\P}{\mathcal{P}} 
\newcommand{\C}{\mathbb{C}} 
\newcommand{\commF}[1]{{\color{red}*** #1 ***}}
\newcommand{\commFe}[1]{{\color{blue}*** #1 ***}}
\newcommand{\commP}[1]{{\color{orange}*** #1 ***}}

\title[An elementary approach to Wehrl-type entropy bound]{An elementary approach to Wehrl-type entropy bounds in quantitative form}
\author{Fabio Nicola}
\address[Fabio Nicola]{Dipartimento di Scienze Matematiche, Politecnico di Torino, Corso Duca degli Abruzzi 24, 10129 Torino, Italy}
\email{fabio.nicola@polito.it}

\author{Federico Riccardi}
\address[Federico Riccardi]{Dipartimento di Scienze Matematiche, Politecnico di Torino, Corso Duca degli Abruzzi 24, 10129 Torino, Italy}
\email{federico.riccardi@polito.it}

\author{Paolo Tilli}
\address[Paolo Tilli]{Dipartimento di Scienze Matematiche, Politecnico di Torino, Corso Duca degli Abruzzi 24, 10129 Torino, Italy}
\email{paolo.tilli@polito.it}
\begin{document}

	\keywords{Wehrl entropy, coherent states, representations, holomorphic polynomials, stability, Bergman spaces}
	\subjclass[2020]{81R30, 22E70, 30H20, 32A10, 47N50, 49J40}
	
	\begin{abstract}
We consider the problem of the stability (with sharp exponent) of the Lieb--Solovej inequality for symmetric $SU(N)$ coherent states, which was obtained only recently by the authors. Here, we propose an elementary proof of this result, based on reformulating the Wehrl-type entropy as a function defined on the unit sphere in $\mathbb{C}^d$, for some suitable $d$, and on some explicit (and somewhat surprising) computations.
\end{abstract}

	\maketitle
	
	\section{Introduction}
    Recent years have witnessed an intense and fruitful interest in various questions connected with the \emph{Wehrl-type entropy conjecture}. The origins of this problem  date back to the late 1970s, when Wehrl \cite{wehrl_entropy} introduced a notion of classical entropy for quantum-mechanical density matrices on $L^2(\R)$ and conjectured that Glauber coherent states, which are coherent states in representations of the Heisenberg group, are the only optimizers for this entropy. Within a short period, Lieb \cite{lieb_entropy} proved that coherent states are indeed optimizers (with uniqueness
     established later by Carlen \cite{carlen}) and suggested that an analogous result should hold also for Bloch coherent spin states, which are coherent states for the irreducible representations of $SU(2)$. This new conjecture was proven in full generality several years later by Lieb and Solovej \cite{lieb_solovej_SU(2)}, who further generalized their proof \cite{lieb_solovej_SU(N)} in order to show an analogous result for the symmetric $SU(N)$ coherent states ($N \geq 2$) (see also \cite{nicolatilli_fk, frank_sharp_inequalities} for the full conjecture on Glauber coherent states and \cite{kulikov} for $SU(1,1)$ coherent states). However, the question of uniqueness of the optimizers for the $SU(2)$ and the $SU(N)$ Wehrl-type entropy remained open until recently. For the $SU(2)$ case, it was proved independently and simultaneously in \cite{frank_sharp_inequalities} and in \cite{kulikov_nicola_ortega_tilli}, while the $SU(N)$ case was finally settled in \cite{NRT_wehrl_SU(N)}. Along with the question of uniqueness, the issue of stability was also addressed in recent years. Roughly speaking, the goal of a stability estimate is to prove and quantify the idea that states with almost minimal entropy should be almost coherent states. The stability of the Lieb--Solovej inequality for $SU(2)$ coherent states was proved in \cite{garcia_ortega_stability}, while for symmetric $SU(N)$ coherent states it was proved in \cite{NRT_wehrl_SU(N)} (see also \cite{frank2023generalized, GGRT, gomez2024uniform, melentijevic2025stability} for the stability of the estimate in the case of Glauber and $SU(1,1)$ coherent states). In all these cases, the proof relies on a careful and highly nontrivial study of the measure of the super-level sets of the Husimi function (see below) using a technique first introduced in \cite[][Lemma 2.1]{GGRT}.
    
    The main goal of this paper is to prove the stability of the estimate for the Wehrl-type entropy conjecture for symmetric $SU(N)$ coherent states using only elementary tools and avoiding any study of the measure of the super-level sets of the Husimi function. In fact, to simplify matters and avoid additional technicalities, we will focus on the case of pure states, although this method could likely be adapted to handle density matrices as well. The main idea of the proof is to rephrase the problem in order to express the Wehrl-type entropy as a function on the unit sphere in a high-dimensional Euclidean space. Then, we are able to compute the second differential of the entropy on the submanifold where the entropy achieves its minimum value, and we can show that it is uniformly positive definite on the normal bundle to the submanifold (in fact, we will instead consider an equivalent formulation with a convex function rather than a concave one, which corresponds to a maximization problem).
    
    Although this approach is elementary, it requires some lengthy---and somewhat surprising---computations. In order to make things clearer, we have organized the paper as follows.

    In Section \ref{sec:notation and setting} we fix the notation and rephrase the problem in order to write the Wehrl-type entropy as a function on the unit sphere in $\C^d$, for a suitable 
    dimension $d$. This geometric setting requires some preliminary notions of differential geometry, which are gathered at the end of the section. In Section \ref{sec:proof regular case} we prove the stability result under an additional smoothness assumption (which is satisfied in the special cases of interest; see Remark \ref{remark: xlogx} below). This enables us to present the core concept of the proof without encountering further technical complications. The proof of the result in full generality is then given in Section \ref{sec:proof non regular case} and will require some technical (and more cumbersome) lemmas that are collected in Appendix \ref{sec:appendix}.
	
	\section{Notation and rephrasing of the problem in terms of holomorphic polynomials}\label{sec:notation and setting}

    \subsection{Notation}
    For $z = (z_1,\ldots,z_N),\, w = (w_1,\ldots,w_N)  \in \C^N$ and a multi-index $\alpha = (\alpha_1, \ldots, \alpha_N) \in \N^N$ we adopt the standard notation
	\begin{gather*}
		z^\alpha = z_1^{\alpha_1} \cdots z_N^{\alpha_N}, \quad |z|^2 = |z_1|^2 + \cdots +|z_N|^2, \quad z \cdot w = z_1w_1 + \cdots +z_N w_N,\\
		|\alpha| = \alpha_1 + \cdots +\alpha_N, \quad \alpha! = \alpha_1! \cdots \alpha_N!
	\end{gather*}
    Given $M \in \N$ and a multi-index $\alpha \in \N^N$ of length at most $M$ we have
    \begin{equation*}
        \binom{M}{\alpha} = \dfrac{M!}{\alpha_1! \cdots \alpha_N! (M-|\alpha|)!}.
    \end{equation*}
    Dealing with sums over a multi-index, we denote by
    $$ \sum_{|\alpha| \leq M}$$
     the sum over all  multi-indices $\alpha \in \N^N$ of total length at most $M$, and by
    \begin{equation*}
        \sum_{\alpha \neq 0}
    \end{equation*}
the sum over all  multi-indices of total length at most $M$ except for the null multi-index,
the condition $|\alpha|\leq M$ being understood
(this should cause no confusion, since $M$ is fixed throughout the paper).
Similarly, we denote by
    \begin{equation*}
        \sum_{\alpha \neq \beta}\
    \end{equation*}
    the double sums over all multi-indices $\alpha$ and $\beta$ (of length at most $M$) that are not equal (i.e. they differ in at least one component).
    \subsection{From the original setting to the unit sphere}
	The original theorem by Lieb and Solovej \cite{lieb_solovej_SU(N)} is stated in terms of symmetric representations of $SU(N+1)$ as follows. Fix $N \geq 1$, $M \geq 1$ and consider the Hilbert space $\mathcal{H}_M = (\bigotimes^M \C^{N+1})_{\text{sym}}$, that is the image of $\bigotimes^M \C^{N+1}$ under the projection
	\begin{equation*}
		P_M (v_1 \otimes \cdots \otimes v_M) = \dfrac{1}{M!} \sum_{\sigma \in S_M} v_{\sigma(1)} \otimes \cdots \otimes v_{\sigma(M)},
	\end{equation*}
	where $S_M$ denotes the permutation group over $\{1,\ldots, M\}$. We denote by $\langle \psi|\phi\rangle$ the inner product of $\psi,\phi\in\mathcal{H}_M$, with the agreement that it is linear in the second argument. On $\mathcal{H}_M$, we consider the representation of the group $SU(N+1)$ given by
	\begin{equation*}
		\begin{aligned}
			R(v_1 \otimes \cdots \otimes v_M) = (Rv_1) \otimes \cdots \otimes (Rv_M),\qquad R\in SU(N+1)
		\end{aligned}
	\end{equation*}
	and extended on the whole $\mathcal{H}_M$ by linearity. Fix a normalized vector $v \in \C^{N+1}$. Given a density matrix $\rho$ (that is, a nonnegative self-adjoint operator on $\mathcal{H}_M$ with trace one), its \emph{Husimi function} $u_\rho:SU(N+1)\to\mathbb{R}$ is defined as
	\begin{equation*}
		u_{\rho}(R) = \langle \otimes^M Rv \vert \rho \vert \otimes^M R v \rangle, \quad R \in SU(N+1).
\end{equation*}
It is easy to see (\cite{lieb_solovej_SU(N)}) that $u_\rho$ is continuous (in fact real-analytic), and satisfies $0\leq u_\rho\leq 1$, as well as 
\begin{equation}\label{eq:normalization of the Husimi function}
    \textrm{dim}(\mathcal{H}_M)\int_{SU(N+1)} u_\rho(R)\, dR=1,
\end{equation}
where $dR$ denotes the Haar probability measure on $SU(N+1)$.

	Then, the Lieb--Solovej inequality \cite[][Theorem 2.1]{lieb_solovej_SU(N)} states that, for any convex function $\Phi \colon [0,1] \to \R$, the Wehrl-type entropy
	\begin{equation*}
		G(\rho) = \int_{SU(N+1)} \Phi(u_{\rho}(R)) \, dR
	\end{equation*}
	is maximized by (symmetric) coherent states, that is, when \[
    \rho = \vert \otimes^M v \rangle \langle \otimes^M v|
    \]
    for some $v \in \C^{N+1}$ of norm one. In fact, with a slight abuse of language, we will call a coherent state any vector of the type $\otimes^M v\in \mathcal{H}_M$, with $v\in\mathbb{C}^{N+1}$, $|v|=1$.
    \begin{remark}
        Note that any coherent state can be written as $\otimes^M Rv_0$ for some $R \in SU(N+1)$, where $v_0 = (1,0,\ldots,0) \in \C^{N+1}$. In other words, the subset of coherent states is the orbit in $\mathcal{H}_M$ of the coherent state associated with the vector $v_0$ under the action of $SU(N+1)$. 
    \end{remark}
    Note that, when $\Phi$ is affine, $G$ is constant (as easily seen from the normalization \eqref{eq:normalization of the Husimi function}), whereas if $\Phi$ is strictly convex in some interval $(a,1)$ with $a \in (0,1)$ then coherent states are the only maximizers (see \cite[][Theorem 1.2 and Remark 4.3]{NRT_wehrl_SU(N)}). Moreover, the estimate is stable, meaning that  a density matrix $\rho$  that almost maximizes the functional $G$ is almost a coherent state. To make this precise, we introduce the distance, in the trace norm, between $\rho$ and the subset of coherent states, that is,
    \begin{equation*}
        D[\rho] = \inf_{v \in \C^{N+1}, |v|=1} \|\rho - |\otimes^M v \rangle \langle \otimes ^M v|\|_1.
    \end{equation*}
    
    \begin{teo}[\cite{NRT_wehrl_SU(N)}]\label{th:stability in H_M}
        For every convex function $\Phi \colon [0,1] \to \R$ that is strictly convex in some interval $(a,1)$ with $a \in (0,1)$, there exists a constant $c>0$ such that for every density matrix $\rho$ on $\mathcal{H}_{M}$ we have
        \begin{equation*}
            \int_{SU(N+1)} \Phi(u_0(R)) \, dR -  \int_{SU(N+1)} \Phi(u_{\rho}(R)) \, dR \geq c D[\rho]^2,
        \end{equation*}
        where $u_0$ is the Husimi function of any coherent state.
    \end{teo}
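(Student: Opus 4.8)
The plan is to reduce the statement to a second-variation computation on the unit sphere of $\mathcal{H}_M$, exactly along the reformulation announced in the introduction. First I would restrict to pure states $\rho = |\psi\rangle\langle\psi|$ (deferring the general density-matrix case) and recall that for unit vectors $\||\psi\rangle\langle\psi| - |\phi\rangle\langle\phi|\|_1 = 2\sqrt{1 - |\langle\psi|\phi\rangle|^2}$, so that $D[\rho]^2$ is comparable to the squared distance, on the unit sphere $S \subset \mathcal{H}_M \cong \C^d$ with $d = \binom{M+N}{N}$, from the coefficient vector of $\psi$ to the submanifold $\Sigma$ of normalized coherent states $\otimes^M v$. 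Identifying $\langle \otimes^M w | \psi\rangle$ with a homogeneous polynomial of degree $M$ in $\bar w$, and pushing the Haar measure forward to the uniform probability measure $d\sigma$ on $S^{2N+1} \subset \C^{N+1}$ under $R \mapsto Rv_0$, the functional becomes $F(\psi) = \int_{S^{2N+1}} \Phi(|\langle \otimes^M w|\psi\rangle|^2)\,d\sigma(w)$, and the reference coherent state $\psi_0 = \otimes^M v_0$ has Husimi function $u_0(w) = |w_0|^{2M}$.

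Next I would write a perturbation as $\psi = \cos t\,\psi_0 + \sin t\,\phi$ with $\phi \perp \psi_0$, $\|\phi\| = 1$, and expand $F$ to second order in $t$. Setting $A(w) = \langle\otimes^M w|\psi_0\rangle$ and $B(w) = \langle\otimes^M w|\phi\rangle$, the Husimi function expands as $u = u_0 + 2t\,\mathrm{Re}(\bar A B) + t^2(|B|^2 - |A|^2) + O(t^3)$. The first-order term of $F$ vanishes by the Lieb--Solovej maximality of $\psi_0$, and the coefficient of $t^2$ is the quadratic form
\[
Q(\phi) = \int_{S^{2N+1}}\Phi'(u_0)\bigl(|B|^2 - |A|^2\bigr)\,d\sigma + 2\int_{S^{2N+1}}\Phi''(u_0)\,\bigl[\mathrm{Re}(\bar A B)\bigr]^2\,d\sigma.
\]
Expanding $B$ in the monomial basis $\{\overline{w^\beta}\}_{|\beta|=M}$ and using that $\phi \perp \psi_0$ forces the coefficient of $\overline{w_0^M}$ to vanish, the pure terms $(\bar A B)^2$ and $\overline{(\bar A B)}^2$ integrate to zero; by phase invariance of $d\sigma$ and of $\Phi'(|w_0|^{2M})$, all remaining cross terms $\overline{w^\beta}w^\gamma$ with $\beta\neq\gamma$ also vanish. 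Hence $Q$ becomes \emph{diagonal}: $Q(\phi) = \sum_{\beta}|b_\beta|^2\,(P_\beta + S_\beta - K)$, where $P_\beta, S_\beta, K$ are explicit sphere integrals of $\Phi'$ and $\Phi''$ against powers of $|w_0|^2$.

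The core of the proof is then to evaluate these integrals and show $P_\beta + S_\beta - K \le 0$ for every $\beta$, with equality precisely on the monomials $\overline{w_0^{M-1}w_j}$ ($j \ge 1$) that span the tangent space to $\Sigma$. Using the Dirichlet formula $\int_{S^{2N+1}}|w^\alpha|^2\,d\sigma = N!\,\alpha!/(N+M)!$ and conditioning on $r = |w_0|^2$ (whose law is $\mathrm{Beta}(1,N)$), each of $P_\beta, S_\beta, K$ reduces to a one-dimensional integral of $\Phi'(r^M)$ or $\Phi''(r^M)$; an integration by parts then converts the $\Phi''$-integral into $\Phi'$-integrals and should expose the cancellation. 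I expect this computation, and the control of the sign for every $\beta$, to be the main obstacle, together with the fact that for a merely convex $\Phi$ the second derivative $\Phi''$ is only a nonnegative measure, so the integration by parts must be justified in the Stieltjes sense; this is precisely what separates the smooth case from the general one and forces the technical appendix lemmas.

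Finally, strict convexity of $\Phi$ on $(a,1)$ yields a strict quantitative gap $P_\beta + S_\beta - K \le -\lambda < 0$ on the normal directions, hence $Q(\phi) \le -\lambda\|\phi\|^2$ there; by the $SU(N+1)$-invariance of $F$ the group acts transitively on $\Sigma$, so this negative-definiteness of the Hessian on the normal bundle holds uniformly along $\Sigma$. This gives a local bound $F(\psi_0) - F(\psi) \ge c\,\mathrm{dist}(\psi,\Sigma)^2$ in a tubular neighborhood of $\Sigma$. Away from that neighborhood $\mathrm{dist}(\psi,\Sigma)$ is bounded below, while by the uniqueness of coherent maximizers (\cite{NRT_wehrl_SU(N)}) and compactness of $S$ the deficit $F(\psi_0)-F(\psi)$ is bounded below by a positive constant, so the quadratic bound persists. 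Since $\mathrm{dist}(\psi,\Sigma)^2$ is comparable to $D[\rho]^2$, combining the two regimes gives the claimed estimate $\int \Phi(u_0)\,dR - \int \Phi(u_\rho)\,dR \ge c\,D[\rho]^2$.
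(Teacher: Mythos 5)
Your proposal for the smooth case is, in substance, the paper's own argument restricted to pure states: the same reformulation of the entropy as a function on the unit sphere of $\C^d$, the same geodesic perturbation $\cos t\,\psi_0+\sin t\,\phi$ with $\phi\perp\psi_0$, the same diagonalization of the second variation through orthogonality/phase invariance of monomials, and the same reduction to one-dimensional Beta-type integrals with an integration by parts exposing that the diagonal coefficients vanish exactly in the tangent directions ($|\alpha|=1$) and are strictly negative in the normal ones ($|\alpha|\geq 2$); working with homogeneous polynomials on $S^{2N+1}$ rather than dehomogenized polynomials on $\C^N$ is an immaterial change of coordinates. The one place you genuinely deviate is the far-field regime: you invoke compactness of $S$ together with the uniqueness of maximizers, while the paper's Lemma \ref{lem:estimate for points far from V} gives a self-contained bound by splitting $\Phi$ at the level $T=\sup_z |F(z)|^2/(1+|z|^2)^M$ and using the layer-cake formula. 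Your route is logically sound (uniqueness is proved independently of stability in the cited works), but it imports that nontrivial theorem and produces a non-explicit constant, which runs against the stated aim of an elementary proof.

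The genuine gap is the case of a merely convex $\Phi$, and you have misidentified where the difficulty lies. It is not a matter of ``justifying the integration by parts in the Stieltjes sense'' in the Hessian at the coherent state: when $\Phi\notin C^2$, the functional $G$ is not twice differentiable, so the Taylor identity $G(X(0))-G(X(t))=-\tfrac{1}{2}(G\circ X)''(\xi)\,t^2$ on which your local quadratic bound rests is simply unavailable, and no computation of a weak second variation \emph{at} $t=0$ alone, however carefully interpreted, can yield the bound in a tubular neighborhood of $V$. What is actually needed --- and what Section \ref{sec:proof non regular case} and Lemma \ref{lem:second differential non regular case} of the paper supply --- is: (i) mollify $\Phi$ into convex $\Phi_n\in C^2$ with $\Phi_n''\to\Phi''$ as Radon measures; (ii) apply Taylor to each smooth $G_n$, which produces intermediate points $\xi_n\neq 0$, i.e.\ points near but \emph{not on} $V$, where the level sets of the Husimi-type density are no longer spheres; (iii) prove, via the coarea formula, a representation $(G_n\circ X)''(\xi)=\int_0^1 \Phi_n''(\tau)\,h(\tau,\xi,Y)\,d\tau$ with a kernel $h$ independent of $n$, continuous in $(\tau,\xi,Y)$, and strictly negative at $\xi=0$, hence uniformly negative for $|\xi|<\varepsilon$; and (iv) pass to the limit in $n$ using the weak convergence of $\Phi_n''$, picking up the total mass $\Phi''((0,1))>0$. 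This uniform-in-$n$ negativity of the second derivative at nearby points (which in turn requires localizing the relevant level sets in a compact set where $|\nabla u|\geq c>0$) is the real content separating the general case from the smooth one; since your proposal contains neither this idea nor a substitute for it, as written it proves the theorem only under the additional hypothesis $\Phi\in C^2([0,1])$.
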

    \begin{remark}
        Actually, this theorem holds as long as $\Phi$ is not affine in $(a,1)$ for every $a \in (0,1)$ (see \cite[][Remark 4.4]{NRT_wehrl_SU(N)}). However, for the sake of clarity, we will work under the slightly stronger assumption that $\Phi$ is strictly convex in some interval $(a,1)$, for some $a \in (0,1)$.
    \end{remark}
    
    As already mentioned in the introduction, the goal of this paper is to give an elementary proof of this stability estimate for pure states (i.e. when $\rho = |\psi \rangle \langle \psi|$ for some $\psi \in \mathcal{H}_{M}$). In order to pursue this goal, we need to rephrase the problem in a more geometrical setting. Following the argument presented in \cite[][Section 3.1]{NRT_wehrl_SU(N)} one can show that, for pure states, the Wehrl-type entropy can be written as
	\begin{equation*}
		\int_{\C^N} \Phi \left( \dfrac{|F(z)|^2}{(1+|z|^2)^M}\right) \, d\nu(z),
	\end{equation*} 
	where 
    \begin{equation}\label{eq:polynomial from state vector}
    F(z) = \langle \otimes ^M(1,\overline{z}) | \psi  \rangle,\quad z \in \C^N
    \end{equation}
    is a holomorphic polynomial (since the inner product is linear in the second argument) on $\C^N$ of degree at most $M$ and $d\nu$ is a probability measure (see below for the explicit expression). Since the correspondence between unit pure state vectors and suitably normalized holomorphic polynomials of this kind is one-to-one, we are naturally led to consider the reproducing kernel Hilbert space $\P_M$ of holomorphic polynomials on $\C^N$ of degree at most $M$, endowed with the norm
	\begin{equation}\label{eq:norm on PM}
		\|F\|_{\P_M}^2 = \mathrm{dim}(\P_M) \int_{\C^N} \dfrac{|F(z)|^2}{(1+|z|^2)^M} \, d\nu(z),
	\end{equation}
	where 
    \begin{equation}\label{eq d}
    \mathrm{dim}(\P_M) = \binom{M+N}{N} \eqqcolon d
    \end{equation}
    and 
    \begin{equation}\label{eq dnu}
        d\nu(z) = c_N (1+|z|^2)^{-N-1}dA(z),\quad c_N = N!/\pi^N
    \end{equation}
    is a probability measure on $\C^N$; here $dA(z)$ denotes the Lebesgue measure on $\C^N$. The normalized monomials 
	\begin{equation*}
		e_{\alpha}(z) = c_{\alpha}z^{\alpha}, \quad z \in \C^N,\ \alpha \in \N^N,\ |\alpha| \leq M,
	\end{equation*}
	where $c_{\alpha} = \sqrt{\binom{M}{\alpha}}$, form an orthonormal basis of $\P_M$, therefore, we immediately have the expression of the reproducing kernel
	\begin{equation*}
		K(z,w) := \sum_{|\alpha|\leq M} e_{\alpha}(z) \overline{e_{\alpha}(w)} = (1+z \cdot \overline{w})^M,
	\end{equation*}
	whereas the normalized reproducing kernel at a point $w \in \C^N$ is
	\begin{equation*}
		k_w(z) := \dfrac{K(z,w)}{\|K(\cdot,w)\|_{\P_M}} = \dfrac{(1+z \cdot \overline{w})^M}{(1+|w|^2)^{M/2}}, \quad z \in \C^N.
	\end{equation*}
	With a slight abuse of notation, we define the Wehrl-type entropy of a normalized polynomial $F \in \P_M$ as
	\begin{equation}\label{eq:expression of sup}
		G(F) = \int_{\C^N} \Phi \left( \dfrac{|F(z)|^2}{(1+|z|^2)^M} \right) d\nu(z).
	\end{equation}
	Among all normalized polynomials in $\P_M$ this entropy is maximized by the normalized (multiples of the) reproducing kernels, that is, on the subset
	\begin{equation*}
		V \coloneqq \{e^{i\theta} k_{w}\, \colon \, w \in \C^N,\ \theta \in \R\} \subset \P_M,
	\end{equation*}
    since the normalized (multiples of the) reproducing kernels are exactly the polynomials that arise from coherent states through the formula \eqref{eq:polynomial from state vector}. Thus,  the supremum of $G$ over all possible normalized polynomials in $\mathcal{P}_M$ is achieved at any normalized reproducing kernel, and choosing  the reproducing 
    kernel at $w=0$ (that is, the constant function 1) we obtain
    \begin{equation*}
        \sup G = \int_{\C^N} \Phi \left( \dfrac{1}{(1+|z|^2)^M} \right) \, d\nu(z).
    \end{equation*}    
	Given a normalized polynomial $F \in \P_M$, the distance from $F$ to $V$ is, by definition,
	\begin{equation*}
		D[F] = \mathrm{inf}\{\|F-e^{i\theta}k_w\|_{\P_M} \, \colon \, w \in \C^N,\ \theta \in \R\}.
	\end{equation*}
	Then, Theorem \ref{th:stability in H_M} (for pure states) can be rephrased as follows.
	\begin{teo}\label{th:main_theorem_polynomials}
		For every convex function $\Phi \colon [0,1] \to \R$ that is strictly convex in some interval $(a,1)$ with $a \in (0,1)$, there exists a constant $c > 0$ such that
		\begin{equation*}
			\sup G - G(F) \geq c D[F]^2
		\end{equation*}
		for every normalized polynomial $F \in \P_M$.
	\end{teo}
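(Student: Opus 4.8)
The plan is to localize the estimate near the maximizing manifold $V$ and then carry out a second-order analysis there. First, since $\Phi$ is strictly convex on some $(a,1)$, the cited uniqueness result guarantees that $V$ is exactly the set of maximizers of $G$ on the (compact) unit sphere of $\P_M$, with $G$ strictly below $\sup G$ off $V$. As $D[F]\le 2$ is bounded on the sphere, a routine compactness argument shows that for $D[F]$ bounded away from $0$ the inequality $\sup G - G(F)\ge cD[F]^2$ holds for a suitable $c$; hence it suffices to prove a local estimate for $F$ in a small neighborhood of $V$. Moreover $G$ is invariant under the natural unitary action of $SU(N+1)$ on $\P_M$, which acts transitively on $V$ (the orbit of the constant function $1=k_0$), so it is enough to work near the single base point $F_0=1$.

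Near $F_0$ I would use the orthonormal basis $\{e_\alpha\}$ to write a normalized $F$ as $F=a_0e_0+\sum_{\alpha\neq0}b_\alpha e_\alpha$ with $|a_0|^2+\sum_{\alpha\neq0}|b_\alpha|^2=1$; since $G$ and $D[F]$ are phase-invariant I may take $a_0\ge 0$, so that $a_0=1-\tfrac12\|b\|^2+O(\|b\|^4)$. Writing $u_0(z)=(1+|z|^2)^{-M}$ for the Husimi function of $F_0$ and expanding $\Phi(u_F)$ to second order, the first-order term vanishes by the rotational symmetry of $u_0$ and $d\nu$ (which kills $\int z^\alpha\,\psi(|z|^2)\,d\nu$ for $\alpha\neq0$), and the same symmetry diagonalizes the quadratic term, since all contributions $\int z^\alpha\overline{z^\beta}\,\psi(|z|^2)\,d\nu$ with $\alpha\neq\beta$ integrate to zero. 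This yields
\[
\sup G-G(F)=\sum_{\alpha\neq0}\lambda^{(|\alpha|)}\,|b_\alpha|^2+O(\|b\|^3),
\]
where $\lambda^{(k)}=A-B^{(k)}-C^{(k)}$, with $A=\int\Phi'(u_0)u_0\,d\nu$, $B^{(k)}=c_\alpha^2\int\Phi'(u_0)u_0|z^\alpha|^2\,d\nu$ and $C^{(k)}=c_\alpha^2\int\Phi''(u_0)u_0^2|z^\alpha|^2\,d\nu$; crucially, because $c_\alpha^2\,\alpha!=M!/(M-|\alpha|)!$, each of these depends on $\alpha$ only through $k=|\alpha|$.

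The heart of the argument is to show $\lambda^{(k)}\ge0$, with $\lambda^{(1)}=0$ (the tangent directions to $V$, spanned by the $e_{e_j}$) and $\lambda^{(k)}>0$ for $2\le k\le M$ (the normal directions). Reducing each integral to one dimension via the Dirichlet formula $\int_{\C^N}|z^\alpha|^2h(|z|^2)\,dA=\pi^N\tfrac{\alpha!}{(k+N-1)!}\int_0^\infty s^{k+N-1}h(s)\,ds$ and substituting $t=(1+s)^{-1}$ (so $u_0=t^M$), I would integrate by parts to eliminate $\Phi''$ from $C^{(k)}$; after a somewhat lengthy but elementary computation one obtains
\[
\lambda^{(k)}=\int_0^1\Phi'(t^M)\,\rho_k(t)\,dt,\qquad \rho_k(t)=(1-t)^{N-1}t^{M-k}\Big(Nt^k-\tfrac{P_k}{M}(1-t)^{k-1}\big[(k-1)+Nt\big]\Big),
\]
with $P_k=\tfrac{M!}{(M-k)!}\tfrac{N!}{(k+N-1)!}$. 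For $k=1$ one checks directly that $\rho_1\equiv0$, giving $\lambda^{(1)}=0$; and testing the affine case $\Phi(u)=u$ shows $\int_0^1\rho_k=0$ for every $k$. Finally, the sign of $\rho_k$ equals that of $f(t)-1$ for $f(t)=\tfrac{NMt^k}{P_k(1-t)^{k-1}[(k-1)+Nt]}$, and $\tfrac{d}{dt}\log f=\tfrac kt+\tfrac{k-1}{1-t}-\tfrac{N}{(k-1)+Nt}>0$ for $k\ge2$ (since $\tfrac{N}{(k-1)+Nt}<\tfrac1t\le\tfrac kt$), so $f$ is strictly increasing from $0$ to $+\infty$ and $\rho_k$ changes sign exactly once, from $-$ to $+$, at some $t_0$. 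Since $t\mapsto\Phi'(t^M)$ is nondecreasing and $\int_0^1\rho_k=0$, subtracting the constant $\Phi'(t_0^M)$ makes the integrand nonnegative, whence $\lambda^{(k)}\ge0$; strict convexity of $\Phi$ near $1$ renders the integrand strictly positive on a set of positive measure near $t=1$, so $\lambda^{(k)}>0$ for $2\le k\le M$.

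To conclude, setting $\lambda_{\min}=\min_{2\le k\le M}\lambda^{(k)}>0$ and using a tubular-neighborhood comparison---moving along $V$ by adjusting $w$ and $\theta$ absorbs the tangential components $b_{e_j}$, so that $D[F]^2=\sum_{|\alpha|\ge2}|b_\alpha|^2+O(\|b\|^3)$---gives $\sup G-G(F)\ge\tfrac12\lambda_{\min}D[F]^2$ for $F$ sufficiently close to $V$, which together with the global bound finishes the proof. I expect the main obstacle to be twofold. First, and above all, justifying the second-order expansion and controlling its remainder \emph{uniformly} when $\Phi$ is merely convex rather than $C^2$: this is precisely what forces the separate, more delicate treatment of the non-regular case and the technical lemmas of the appendix. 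Second, establishing rigorously the single sign change of $\rho_k$, which is what produces the clean reduction of the positivity of $\lambda^{(k)}$ to the bare monotonicity of $\Phi'$.
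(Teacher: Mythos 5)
Your analysis near $V$ for smooth $\Phi$ is essentially correct and follows the same strategy as the paper's Section \ref{sec:proof regular case}: reduction to the base point by $SU(N+1)$-invariance, diagonalization of the quadratic form by angular symmetry, coefficients depending on $\alpha$ only through $k=|\alpha|$, and the values $\lambda^{(1)}=0$, $\lambda^{(k)}>0$ for $2\le k\le M$ (your $\lambda^{(k)}$ is the negative of the coefficient $b_\alpha$ in Lemma \ref{lem:second differential is negative definite}). Two of your choices genuinely differ from the paper, and both are legitimate. For the regime far from $V$ you use compactness plus the known uniqueness of maximizers, whereas Lemma \ref{lem:estimate for points far from V} obtains an explicit constant from the Lieb--Solovej inequality and a layer-cake computation, keeping the proof elementary and self-contained; your route is sound but imports a deep result from the earlier paper. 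In the second-order computation you integrate by parts so as to eliminate $\Phi''$, landing on $\lambda^{(k)}=\int_0^1\Phi'(t^M)\rho_k(t)\,dt$, and then argue via the single sign change of $\rho_k$, the identity $\int_0^1\rho_k=0$ (tested on affine $\Phi$), and monotonicity of $\Phi'$; the paper integrates by parts in the opposite direction, expressing everything against $\Phi''$ and checking that the resulting polynomial bracket is pointwise negative. I verified your formula (the bracket indeed reduces to $Nt^k-\tfrac{P_k}{M}(1-t)^{k-1}\bigl[(k-1)+Nt\bigr]$), the identity $\rho_1\equiv 0$, and the log-derivative inequality; this variant is a nice observation, since positivity of $\lambda^{(k)}$ then uses only the monotonicity of $\Phi'$ rather than pointwise information on $\Phi''$.

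The genuine gap is that the theorem concerns an \emph{arbitrary} convex $\Phi$ (strictly convex near $1$), while your entire argument rests on the pointwise second-order Taylor expansion of $\Phi$ with a remainder $o(\|b\|^2)$ uniform over directions, which simply does not exist when $\Phi\notin C^2([0,1])$. You flag this yourself as the ``main obstacle'', but flagging it is not filling it, and the fix is not routine. The natural attempt---write $\Phi=\Phi_1+\Phi_2$ with $\Phi_1\in C^2$ convex and strictly convex near $1$ and $\Phi_2$ convex, as in Remark \ref{remark: xlogx} for $t\log t$---fails in general: if $\Phi''$ is a singular measure with full support in $(a,1)$ (such strictly convex $\Phi$ exist), then any $C^2$ function $\Phi_1$ with $\Phi-\Phi_1$ convex must satisfy $\Phi_1''\le 0$ a.e.\ (the absolutely continuous part of the positive measure $\Phi''-\Phi_1''$ is $-\Phi_1''\,dt$), hence $\Phi_1$ is affine and useless. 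A naive mollification $\Phi_n\to\Phi$ also does not close the argument by itself, because the size of the neighborhood of $V$ on which your Taylor remainder is controlled degenerates with the modulus of continuity of $\Phi_n''$. This is precisely what Section \ref{sec:proof non regular case} and Lemma \ref{lem:second differential non regular case} supply: via the coarea formula, $(G_n\circ X)''(\xi)=\int_0^1\Phi_n''(\tau)\,h(\tau,\xi,Y)\,d\tau$ with a kernel $h$ continuous in $(\tau,\xi,Y)$, independent of $n$, and strictly negative at $\xi=0$, which yields a second-derivative bound on a \emph{fixed} neighborhood, uniform in $n$, surviving the limit. Without this (or some substitute giving uniform-in-$n$ second-order control), your proposal proves the statement only under the additional hypothesis $\Phi\in C^2([0,1])$.
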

   \begin{remark}
       Let $\rho = |\psi \rangle \langle \psi|$ and let $F \in \P_M$ be the polynomial associated with the pure state $\psi \in \mathcal{H}_M$. Since unit pure state vectors are in one-to-one correspondence with normalized polynomials in $\P_M$, and since we are dealing with finite-dimensional spaces, the distances $D[\rho]$ and $D[F]$ are equivalent, meaning that there is a constant $C>0$ such that
       \begin{equation*}
           C^{-1} D[\rho] \leq D[F] \leq C D[\rho].
       \end{equation*}
       For this reason, Theorem \ref{th:stability in H_M} (for pure states) and Theorem \ref{th:main_theorem_polynomials} are equivalent.
   \end{remark}
	To prove Theorem \ref{th:main_theorem_polynomials}, we will not work directly with polynomials, but rather with their coefficients in the monomial basis, expressing a 
    normalized polynomial $F \in \P_M$  as
	\begin{equation*}
		F(z) = \sum_{|\alpha| \leq M} X_{\alpha} e_{\alpha}(z), \quad z \in \C^N,
	\end{equation*}
	where the vector $X = (X_{\alpha})_{|\alpha| \leq M}$, due to the orthonormality
    of the $e_{\alpha}$'s, belongs to the $2d-1$ dimensional sphere $S \coloneqq S^{2d-1} \subset \mathbb{R}^{2d}\simeq\C^d$.  Hence, with a slight abuse of notation, for $X = (X_{\alpha}) \in S$ we can write  
	\begin{equation}\label{eq G(X)}
		G(X) = \int_{\C^N} \Phi \left( \dfrac{|\sum_{|\alpha| \leq M} X_{\alpha} e_{\alpha}(z)|^2}{(1+|z|^2)^M} \right) d\nu(z).
	\end{equation}
	In this way, we can regard the Wehrl-type entropy as a function $G \colon S \to \R$, which achieves its maximum at every point of the set
    \begin{equation*}
        V \coloneqq \Big\{X \in S \colon \sum_{|\alpha| \leq M} X_{\alpha} e_{\alpha} = e^{i\theta}k_{w}\ \text{for some } \theta \in \R, w \in \C^{N} \Big\}.
    \end{equation*}
    In the following, we will see that $V$ is a smooth submanifold of $S$, of dimension $2N+1$. 

Endowing $S$ with the Riemannian metric induced by $\R^{2d}\simeq\C^d$,
and denoting by $\mathrm{dist}(X,V)$ the geodesic distance from $X\in S$ to  $V$,
we can restate
 Theorem \ref{th:main_theorem_polynomials}  as follows.
	\begin{teo}\label{th:main theorem on the sphere}
		For every convex function $\Phi \colon [0,1] \to \R$ that is strictly convex in some interval $(a,1)$, with $a \in (0,1)$, there exists a constant $c > 0$ such that
		\begin{equation}\label{eq:main estimate on the sphere}
			\sup_{S} G - G(X) \geq c \, \mathrm{dist}(X,V)^2
		\end{equation}
		for every $X \in S$.
	\end{teo}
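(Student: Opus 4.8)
The plan is to prove \eqref{eq:main estimate on the sphere} by a second-order (Morse--Bott type) analysis of $G$ transverse to $V$, combined with a compactness argument. The first step is to reduce everything to a single point of $V$: the unitary action of $SU(N+1)$ on $\P_M$, together with the global phase $e^{i\theta}$, leaves $G$ invariant and acts transitively on $V$, so it suffices to understand $G$ near the point $X_0\in V$ corresponding to the constant polynomial $1=k_0$, that is $X_0=e_0$. Differentiating the parametrization $(w,\theta)\mapsto e^{i\theta}k_w$ at $(0,0)$ shows that $T_{X_0}V$ is spanned by $ie_0$ and by $e_{\epsilon_j},\,ie_{\epsilon_j}$ (with $\epsilon_j$ the unit multi-indices, $j=1,\dots,N$), confirming $\dim V=2N+1$; the normal space $N_{X_0}V\subset T_{X_0}S$ therefore consists exactly of those $Y$ whose components $Y_\alpha$ vanish for $|\alpha|\le 1$.

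Next I would compute the second variation. Writing $\Lambda(z)=(1+|z|^2)^{-M}$, a unit normal $Y$ determines $g=\sum_{2\le|\alpha|\le M}Y_\alpha e_\alpha$, and along the great circle $\gamma(t)=\cos t\,X_0+\sin t\,Y$ one has $\frac{|F_t|^2}{(1+|z|^2)^M}=\Lambda\bigl(\cos^2t+\sin 2t\,\mathrm{Re}\,g+\sin^2t\,|g|^2\bigr)$. Since $X_0$ maximizes $G|_S$, the first variation vanishes and the Hessian equals the geodesic second derivative, namely
\[
\mathrm{Hess}(G|_S)_{X_0}(Y,Y)=\int_{\C^N}\Bigl(4\Phi''(\Lambda)\Lambda^2(\mathrm{Re}\,g)^2+2\Phi'(\Lambda)\Lambda(|g|^2-1)\Bigr)\,d\nu.
\]
Because $\Lambda$ is radial while $g$ has no terms of degree $\le 1$, the phase invariance $z_j\mapsto e^{i\vartheta_j}z_j$ annihilates $\int\Phi''(\Lambda)\Lambda^2g^2\,d\nu$ and its conjugate (their integrands are holomorphic with no constant term), so $\int\Phi''(\Lambda)\Lambda^2(\mathrm{Re}\,g)^2\,d\nu=\tfrac12\int\Phi''(\Lambda)\Lambda^2|g|^2\,d\nu$, and the same invariance diagonalizes $|g|^2$. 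Setting $A_\alpha=c_\alpha^2\int_{\C^N}\bigl(\Phi''(\Lambda)\Lambda^2+\Phi'(\Lambda)\Lambda\bigr)|z^\alpha|^2\,d\nu$ (which depends only on $k=|\alpha|$, and will be denoted $A_k$) and $B_0=\int_{\C^N}\Phi'(\Lambda)\Lambda\,d\nu$, and using $\sum|Y_\alpha|^2=1$, this collapses to
\[
\mathrm{Hess}(G|_S)_{X_0}(Y,Y)=2\!\!\sum_{2\le|\alpha|\le M}\!\!|Y_\alpha|^2\,(A_{|\alpha|}-B_0).
\]
Thus transverse negative-definiteness is equivalent to $A_k<B_0$ for every $2\le k\le M$, with a uniform gap over the finitely many values of $k$.

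The heart of the matter is to establish $A_k<B_0$. After subtracting from $\Phi$ its tangent line at $s=1$ (which only shifts $G$ by a constant, since $\int\frac{|F|^2}{(1+|z|^2)^M}\,d\nu$ is constant on $S$, and hence affects neither $\sup G-G$ nor the comparison) we may assume $\Phi\ge0$ and $\Phi'\le0$ on $[0,1]$. The change of variables $s=\Lambda=(1+|z|^2)^{-M}$ followed by $s=\sigma^M$ reduces $A_k$ and $B_0$ to one-dimensional integrals; writing $\phi(\sigma)=\Phi(\sigma^M)$ and integrating by parts yields the closed form
\[
A_k=\frac{C_k}{M^2}\int_0^1\phi'(\sigma)\,\sigma^{1-k}(1-\sigma)^{N+k-2}\bigl[(k-1)+N\sigma\bigr]\,d\sigma,\qquad C_k=\frac{M!\,N!}{(M-k)!\,(N+k-1)!},
\]
and in particular $A_1=B_0$, the zero mode matching $T_{X_0}V$. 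The remaining inequality $A_k<A_1$ for $k\ge 2$ is the surprising computation: its integrand is not of one sign, and I would prove it by a further integration by parts that displays $A_1-A_k$ as the integral of $-\phi'\ge0$ against a kernel controlled via $\Phi''\ge0$, the strict inequality coming from the strict convexity of $\Phi$ on $(a,1)$ (which forces $-\phi'>0$ near $\sigma=1$). I expect this sign analysis to be the main obstacle.

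Finally I would upgrade the pointwise bound to \eqref{eq:main estimate on the sphere}. By symmetry the transverse estimate $\mathrm{Hess}(G|_S)(Y,Y)\le -c_0|Y|^2$ holds at \emph{every} point of $V$, uniformly; since $\nabla(G|_S)$ vanishes on the maximizing set $V$ and $V$ is compact, a Taylor expansion in normal (Fermi) coordinates around $V$ gives $\sup G-G(X)\ge c_1\,\mathrm{dist}(X,V)^2$ on a tubular neighborhood. Off that neighborhood $\mathrm{dist}(X,V)$ is bounded below while $G<\sup G$ on the complementary compact set—by the uniqueness of maximizers under the strict-convexity hypothesis, the maximum is attained only on $V$—so $\sup G-G$ is there bounded below by a positive constant, which dominates $c_2\,\mathrm{dist}(X,V)^2$ as $\mathrm{dist}\le\pi$; combining the two regions yields the claim. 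The computations above require enough regularity of $\Phi$ to differentiate twice under the integral (met in the $x\log x$ case); the general convex case is then recovered by approximating $\Phi$ with smooth convex functions and passing to the limit, with the technical lemmas of the appendix guaranteeing uniformity of the constants.
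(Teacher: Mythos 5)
Your reduction to $X_0$, the identification of $(T_{X_0}V)^{\perp}$, and the diagonalization of the Hessian as $2\sum_{2\le|\alpha|\le M}|Y_\alpha|^2(A_{|\alpha|}-B_0)$ all match the paper's Lemma \ref{lem:second differential is negative definite} (your $A_{|\alpha|}-B_0$ is the paper's $b_\alpha$). But there are two genuine gaps. First, the inequality $A_k<B_0$ for $2\le k\le M$ --- which you yourself call ``the heart of the matter'' and ``the main obstacle'' --- is never proved: you derive a one-dimensional integral form and then only describe what kind of argument you \emph{would} attempt. This is not a technicality but the quantitative core of the whole theorem. The paper settles it by integrating the $\Phi'$ contribution by parts using the explicit primitive of Lemma \ref{lem:primitive}, which merges everything into a single integral against $\Phi''\ge 0$, namely \eqref{eq:expression of b_alpha}; the bracket there is \emph{pointwise} strictly negative for $|\alpha|\ge 2$ because of the identity $\frac{M}{A_{M,N,|\alpha|}}\binom{M+N}{|\alpha|+N-1}=\frac{M}{M-|\alpha|+1}>1$, while it vanishes identically for $|\alpha|=1$. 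Some such exact sign computation must actually be carried out; without it there is no estimate.

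Second, your closing sentence --- prove the theorem for smooth $\Phi$ and ``recover the general convex case by approximating with smooth convex functions and passing to the limit, with the technical lemmas of the appendix guaranteeing uniformity of the constants'' --- hides precisely the hard part. Taylor's formula applied to $G_n$ (with $\Phi_n$ smooth) gives $G_n(X(0))-G_n(X(t))=-\tfrac12 (G_n\circ X)''(\xi_n)\,t^2$ at intermediate points $\xi_n\neq 0$ depending on $n$, so you need the second derivative to be uniformly negative at points \emph{near} $V$, uniformly in $n$; but when $\Phi\notin C^2$ the mollifications satisfy $\|\Phi_n''\|_\infty\to\infty$, so the Hessians of $G_n$ have no uniform modulus of continuity, and negative-definiteness on $V$ alone does not transfer to nearby points uniformly in $n$. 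The paper's resolution (Lemma \ref{lem:second differential non regular case}) is a representation formula, obtained via the coarea formula after first replacing $\Phi$ by its affine continuation outside a compact interval $[a,b]\subset(0,1)$: one gets $(G_n\circ X)''(t)=\int_0^1\Phi_n''(\tau)\,h(\tau,t,Y)\,d\tau$ with a single continuous kernel $h$, independent of $n$ and uniformly negative for $\tau\in[a,b]$, $|t|$ small and $Y$ a unit normal, so that only the total mass $\int\Phi_n''\to\Phi''((0,1))>0$ survives in the limit. Nothing in your proposal plays this role. A minor further remark: for $\mathrm{dist}(X,V)\ge\varepsilon$ you invoke compactness together with uniqueness of maximizers (a nontrivial imported theorem), whereas the paper's Lemma \ref{lem:estimate for points far from V} obtains this bound by an explicit, self-contained layer-cake argument; your route is logically admissible but forfeits the elementary, quantitative character the argument is meant to have.
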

	This theorem is equivalent to Theorem \ref{th:main_theorem_polynomials} because $D[F]$ is
     the Euclidean distance from $X$ to $V$, and  the geodesic distance on $S$ is equivalent to the Euclidean distance, when $S$ is seen as a subset of $\C^d$. Indeed, if $X_1, X_2 \in S$,
    \[
    |X_1-X_2|=2\sin(\textrm{dist}(X_1,X_2)/2).
    \]
    \begin{remark}\label{remark: SU(N) invariance}        
        So far we have seen that there is a unitary correspondence between  $\mathcal{H}_{M}$ and the space of polynomials $\mathcal{P}_{M}$, and we now have another unitary map $\P_M\to\C^d \supset S$. Due to these correspondences, the action of $SU(N+1)$ on $\mathcal{H}_{M}$ is automatically transferred on $\mathcal{P}_M$, $\C^d$ and $S$. Moreover, since the set of coherent states in $\mathcal{H}_M$ is an orbit of $SU(N+1)$ and since the coherent states are in one-to-one correspondence with the points of $V$ (whether viewed as a subset of the normalized polynomials in $\mathcal{P}_M$ or as a subset of $S$), the isometric action of $SU(N+1)$ on $S$ is, in fact, transitive on $V$. This will be of great importance in the following, as it will allow us to transfer any result proved for a single point in $V$ to every other point in $V$.

        We point out that this unitary correspondence between $\mathcal{H}_M$ and $\C^d$ can be made explicit considering a particular orthonormal basis of $\mathcal{H}_M$. In fact, denoting with $\{\iota_n\}_{n=0}^N$  the canonical basis of $\C^{N+1}$, it is easy to see that the vectors
        \begin{equation*}
            \psi_{\alpha} = \sqrt{\binom{M}{\alpha}} P_M (\underbrace{\iota_0 \otimes \cdots \iota_0}_{M-|\alpha| \textrm{ times}} \otimes \underbrace{\iota_1 \otimes \cdots \iota_1}_{\alpha_1 \textrm{ times}} \otimes \cdots \otimes \underbrace{\iota_N \otimes \cdots \otimes \iota_N}_{\alpha_N \textrm{ times}}),
        \end{equation*}
        with $\alpha \in \N^N$ of length at most $M$, form an orthonormal basis of $\mathcal{H}_M$ and are such that
        \begin{equation*}
            \langle \otimes^M (1,\overline{z}) | \psi_{\alpha} \rangle = e_{\alpha}(z).
        \end{equation*}
        Therefore, the unitary correspondence that we obtained through $\mathcal{P}_M$ is simply the map
        \begin{equation*}
            \mathcal{H}_M \ni \psi = \sum_{|\alpha| \leq M} X_{\alpha} \psi_{\alpha} \mapsto (X_{\alpha})_{|\alpha| \leq M} \in \C^d.
        \end{equation*}
    \end{remark}
    
	\subsection{Preliminaries of differential geometry}\label{subsec:differential geometry}
	We conclude this section with some preliminaries of differential geometry. First of all, we notice that the subset $V$ where $G$ achieves its maximum is a smooth submanifold of $S$ of dimension $2N+1$ since it is the preimage of the Veronese variety under the canonical projection $\pi \colon S \to \C P^d$ (see \cite[][Section 11.3]{fulton_harris} and \cite[Remark 2.1 (ii)]{NRT_wehrl_SU(N)}), which is smooth (see \cite[][Page 184]{harris_algebraic_geometry}), and the above projection is a submersion (see \cite[][Theorem 6.30]{lee_smooth_manifolds}). For more information on the connection between coherent states and rational curves, see, for example, \cite{brody_graefe}.
    
    For any point $X \in V \subset S$ we have the decomposition 
    \[
    T_XS = T_XV \oplus (T_XV)^{\perp}.
    \]
    In the following, we are going to consider the $\varepsilon$-neighborhood of $V$, that is the set \[
    U_{\varepsilon} = \{X \in S \colon \mathrm{dist}(X,V) < \varepsilon\},
    \]
    where the distance is intended in the geodesic sense. Then, it is well known (see \cite[Theorem 6.40]{lee_introduction}) that $U_{\varepsilon}$ is also a $\varepsilon$-tubular neighborhood of $V$, which means that every point in $U_\varepsilon$ is of the form $X(t)$ where $X \colon (-\varepsilon,\varepsilon) \to S$ is a geodesic with unit speed and such that $X(0) \in V$, $X'(0) \in (T_{X(0)}V)^{\perp} \subset T_{X(0)}S$. In other words, $U_{\varepsilon}$ is the image of the exponential map restricted to the normal bundle of $V$.

    Rather than at a generic point $X\in S$, we will perform the computations 
    at
    $X_0 = (1,0,\ldots,0)$, where the expressions of the tangent and normal spaces take
    the simple forms
    \begin{equation*}
        \begin{aligned}
             T_{X_0}S &= \{Y = (Y_{\alpha})_{|\alpha| \leq M} \in \C^d \colon \mathrm{Re}Y_0=0\}, \\ 
             T_{X_0}V &= \{Y = (Y_{\alpha})_{|\alpha| \leq M} \in T_{X_0}S \colon Y_{\alpha}=0 \text{ for } |\alpha| \geq 2\}, \\
             (T_{X_0}V)^{\perp} &= \{Y = (Y_{\alpha})_{|\alpha| \leq M} \in T_{X_0}S \colon Y_{\alpha}=0 \text{ for } |\alpha| \leq 1\}.
        \end{aligned}       
    \end{equation*}
    To obtain  the expression for $T_{X_0}V$,  consider a smooth curve $v \colon (-\varepsilon,\varepsilon) \to S$ such that $v(0) = v_0 = (1,0,\ldots,0) \in \C^{N+1}$, and note that  $\otimes^M v(t)$ is a curve on the submanifold of coherent states passing through $\otimes^M v_0$, whose corresponding polynomials are given by
    \begin{equation*}
        \langle \otimes^M(1,\overline{z}) | \otimes^M v(t)  \rangle = \langle (1,\overline{z}) | v(t) \rangle^M = \sum_{|\alpha| \leq M} X_{\alpha}(t) e_{\alpha}(z), \quad z \in \C^N
    \end{equation*}
    where $X(t)$ is a smooth curve in $V \subset S$ such that $X(0) = X_0$.  
    Differentiation at  $t=0$ gives
    \begin{equation*}
        M(v_0'(t) + \sum_{j=1}^{N} v_j'(t) z_j )= \sum_{|\alpha| \leq M} X_{\alpha}'(0) e_{\alpha}(z),
    \end{equation*}
    which proves the inclusion  $T_{X_0} V\subseteq \{Y \in T_{X_0}S   \colon Y_{\alpha}=0 \text{ for } |\alpha| \geq 2\}$. The reverse inclusion follows observing that
the two spaces have the same dimension, since $V$ is a submanifold of $S$ of dimension $2N+1$.

	\section{Proof of Theorem \ref{th:main theorem on the sphere} when $\Phi$ is of class $C^2$}\label{sec:proof regular case}
	
	To illustrate the main idea of the proof, in this section we prove Theorem \ref{th:main theorem on the sphere} under the additional assumption that $\Phi\in C^2([0,1])$. In this case, also the Wehrl-type entropy function $G$ in \eqref{eq G(X)} is easily seen to be of class $C^2$, which simplifies the proof. The key points of the proof are:
	\begin{itemize}
		\item the value of $G$ at points far from $V$ is uniformly distant from the maximal value of $G$;
		\item the second differential of $G$ at every point $X \in V$ is negative semidefinite on $T_X S$ and negative definite on the orthogonal space $(T_XV)^{\perp}$.
	\end{itemize}
    These ideas are formalized by the following lemmas, the first of which does not actually depend on the regularity of $G$ and, in fact, will also be used in the proof for general $\Phi$ in  Section \ref{sec:proof non regular case}. We also observe that the constant $c$ in this lemma is explicit,  in the sense that it is not
obtained by a compactness argument.
	\begin{lem}\label{lem:estimate for points far from V}
		Let $\Phi \colon [0,1] \to \R$ be a convex function that is strictly convex in some interval $(a,1)$ with $a \in (0,1)$. Then, for every $\varepsilon>0$ small enough there exists $c>0$ such that, for every $X \in S$, if $\mathrm{dist}(X,V) \geq \varepsilon$ then 
        \begin{equation*}
            \sup_{S} G - G(X) \geq c.
        \end{equation*}
    In particular, up to reducing the constant $c$, we have
    \begin{equation*}
        \sup_{S} G - G(X) \geq c\,\mathrm{dist}(X,V)^2\quad \textrm{whenever}\quad \mathrm{dist}(X,V) \geq \varepsilon.
    \end{equation*}
	\end{lem}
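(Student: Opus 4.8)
The plan is to reduce everything to a statement about the single quantity $\sup_S G - G(X)$ being bounded below uniformly on the set $\{X : \mathrm{dist}(X,V) \geq \varepsilon\}$, and then to observe that this uniform lower bound automatically upgrades to the quadratic form in the conclusion because the geodesic distance on the sphere $S$ is bounded (by $\pi$, or by the diameter of $S$). The second part is the easy one: if $\sup_S G - G(X) \geq c$ whenever $\mathrm{dist}(X,V) \geq \varepsilon$, then since $\mathrm{dist}(X,V) \leq \pi$ always, we have $\mathrm{dist}(X,V)^2 \leq \pi^2$, so replacing $c$ by $c/\pi^2$ gives $\sup_S G - G(X) \geq c \geq (c/\pi^2)\,\mathrm{dist}(X,V)^2$ on the same set. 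So the whole content is in establishing the first (non-quadratic) inequality.

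For the first inequality, the natural strategy is to \emph{quantify the strict convexity of $\Phi$ away from the coherent-state profile}. The key is to understand what makes $G(X)$ equal to $\sup_S G$: by the Lieb--Solovej inequality, equality holds exactly on $V$, i.e. exactly when the Husimi profile $|\sum_\alpha X_\alpha e_\alpha(z)|^2/(1+|z|^2)^M$ coincides with that of a normalized reproducing kernel. I would isolate the strictly convex region by noting that $\Phi$ is strictly convex on $(a,1)$, and that for a coherent state the profile takes values in $(a,1]$ on a set of positive $\nu$-measure. The plan is to argue by contradiction using compactness: suppose no such $c>0$ exists for a fixed small $\varepsilon$; then there is a sequence $X_n \in S$ with $\mathrm{dist}(X_n,V) \geq \varepsilon$ but $\sup_S G - G(X_n) \to 0$. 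Since $S$ is compact, pass to a convergent subsequence $X_n \to X_\ast$, with $\mathrm{dist}(X_\ast, V) \geq \varepsilon > 0$, so $X_\ast \notin V$. By continuity of $G$ (which follows from continuity of $\Phi$ and dominated convergence, as $0 \leq $ profile $\leq 1$), $G(X_\ast) = \sup_S G$, forcing $X_\ast \in V$ by the characterization of maximizers from the strict-convexity hypothesis — a contradiction.

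There is, however, a subtlety flagged in the lemma statement: the authors explicitly say the constant $c$ is \emph{explicit} and not obtained by compactness. This signals that the intended proof does \emph{not} go through the contradiction argument above but instead constructs $c$ directly from $\Phi$. The plan in that case is to use a quantitative form of strict convexity. Concretely, strict convexity of $\Phi$ on $(a,1)$ gives, on a compact subinterval $[b,1-\delta] \subset (a,1)$, a modulus $m>0$ with $\frac{1}{2}(\Phi(s)+\Phi(t)) - \Phi\bigl(\tfrac{s+t}{2}\bigr) \geq m(s-t)^2$; more usefully, one estimates the Jensen gap $\int \Phi(h_0)\,d\nu - \int \Phi(h_X)\,d\nu$ from below whenever the profile $h_X$ of $X$ is far from the profile $h_0$ of a coherent state. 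The mechanism is that the supremum-achieving profiles are rearrangements of a fixed decreasing function, and any profile of an $X$ with $\mathrm{dist}(X,V) \geq \varepsilon$ must differ from \emph{every} such rearrangement on a set of measure bounded below; pairing this with the strict convexity modulus on the region where the profile lives in $(a,1)$ yields the explicit constant.

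The main obstacle I expect is precisely this last quantitative step: turning a lower bound on the \emph{distance in $S$} of $X$ from $V$ into a lower bound on the \emph{deviation of the Husimi profile} $h_X$ from the family of extremal profiles, in a way that feeds into the convexity gap with an explicit constant. The difficulty is that the map $X \mapsto h_X$ is many-to-one and its quantitative invertibility (a lower bound for $\|h_X - h_0\|$ in terms of $\mathrm{dist}(X,V)$, uniformly) is not obvious; one must exploit that the extremizing profiles form the orbit of $SU(N+1)$ described in Remark \ref{remark: SU(N) invariance}, and use the transitivity of that action on $V$ to reduce to controlling the deviation near the single base point $X_0 = (1,0,\ldots,0)$. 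I would therefore first set up the estimate at $X_0$, exploiting the explicit forms of $T_{X_0}V$ and $(T_{X_0}V)^\perp$, and only afterwards use the isometric transitive action to propagate the bound, and the boundedness of $S$ to pass from the flat lower bound to the quadratic one.
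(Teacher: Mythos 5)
Your compactness argument (the contradiction via a convergent subsequence) does prove the lemma as literally stated: $S$ is compact, $V$ is closed, $G$ is continuous (dominated convergence; the only care needed is that a convex $\Phi$ may be discontinuous at the endpoints $0,1$, which is harmless since the profile lies in $(0,1)$ for a.e.\ $z$), and the hypothesis that $\Phi$ is strictly convex on some $(a,1)$ is exactly the hypothesis under which coherent states are the \emph{only} maximizers. But this is a genuinely different route from the paper's, and it is weaker in two respects. First, as you yourself note, the constant is not explicit, while the paper stresses that its constant is. Second, and more seriously, the uniqueness of maximizers that your argument needs as an input is precisely the main theorem of \cite{NRT_wehrl_SU(N)}, proved there with the non-elementary super-level-set technique that the present paper is written to avoid; the paper's own proof of this lemma uses only the Lieb--Solovej inequality (that the points of $V$ \emph{are} maximizers, needed to say that $G_1$ and $G_2$ attain their suprema on $V$), so that uniqueness comes out as a byproduct of the stability estimate rather than going in as an ingredient. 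Your route is therefore logically valid but against the spirit and the stated purpose of the paper.

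The explicit route you sketch stalls exactly where you predict, and the missing idea is a short, exact computation rather than a delicate rearrangement argument. For the polynomial $F$ associated with $X$, the reproducing-kernel property gives
\begin{equation*}
\|F-e^{i\theta}k_w\|_{\P_M}^2 = 2-2\,\mathrm{Re}\,e^{-i\theta}\frac{F(w)}{(1+|w|^2)^{M/2}},
\qquad\text{hence}\qquad
D[F]^2 = 2-2\sqrt{T},\quad T:=\sup_{z}\frac{|F(z)|^2}{(1+|z|^2)^M}.
\end{equation*}
Thus $\mathrm{dist}(X,V)\geq\varepsilon$ is \emph{equivalent} to a bound $1-T\geq\tau(\varepsilon)$ on the supremum of the Husimi profile: the ``quantitative invertibility of $X\mapsto h_X$'' that you flag as the main obstacle, the comparison with all rearrangements, and the reduction to the base point $X_0$ via the $SU(N+1)$ action are all unnecessary. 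Given this, the paper splits $\Phi=\Phi_1+\Phi_2$ where $\Phi_1$ is the tangent-line continuation of $\Phi|_{[0,T]}$ past $T$; both pieces are convex, so by Lieb--Solovej $\sup_S G - G(X) \geq \sup_S G_2 - G_2(X) = \sup_S G_2$, the last equality because the profile of $F$ never exceeds $T$ and $\Phi_2\equiv 0$ on $[0,T]$. Finally, the layer-cake formula gives
\begin{equation*}
\sup G_2 = \int_T^1 \bigl(\Phi'(t)-\Phi'_{-}(T)\bigr)\,\mu_0(t)\,dt \;\geq\; \int_{1-\tau}^1 \bigl(\Phi'(t)-\Phi'_{-}(1-\tau)\bigr)\,\mu_0(t)\,dt \;=\; c>0,
\end{equation*}
where $\mu_0$, the distribution function of the coherent profile $(1+|z|^2)^{-M}$, is strictly positive on $[0,1)$, and $c>0$ by strict convexity of $\Phi$ in an interval inside $(1-\tau,1)$. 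This produces the explicit constant; the quadratic upgrade then follows as in your first paragraph (indeed $D[F]^2\leq 2$).
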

    \begin{proof}
           We use the notation of Section \ref{sec:notation and setting}.  Let $F = \sum_{|\alpha| \leq M} X_{\alpha} e_{\alpha} \in \P_M$ be the polynomial associated with $X=(X_\alpha) \in S$ and let
            \begin{equation*}
                T := \sup_{z \in \C^N} \dfrac{|F(z)|^2}{(1+|z|^2)^M} \in [0,1].
            \end{equation*}
            We start by proving that, if $\mathrm{dist}(X,V) \geq \varepsilon$, then $1-T \geq \tau$ for some positive constant $\tau$ that depends only on $\varepsilon$. We have
            \begin{align*}
                D[F]^2 &= \inf_{\theta \in \R, w \in \C^N} ||F-e^{i\theta}k_w||_{\mathcal{P}_M}^2 = \inf_{\theta \in \R, w \in \C^N} \left(2-2\mathrm{Re}\, e^{-i\theta} \dfrac{F(w)}{(1+|w|^2)^{M/2}}\right)\\
                       &= 2-2\sup_{\theta \in \R, w \in \C^N} \mathrm{Re}\, e^{-i\theta} \dfrac{F(w)}{(1+|w|^2)^{M/2}} = 2-2 \sup_{w \in \C^N} \dfrac{|F(w)|}{(1+|w|^2)^{M/2}} = 2-2\sqrt{T}.
            \end{align*}
            As observed in Section \ref{sec:notation and setting}, $D[F]$ is just the Euclidean distance from $X$ to $V$ in $\C^d$, while
 for the geodesic distance in the unit sphere we have $\mathrm{dist}(X,V) \leq C D[F]$ for some positive constant $C$. Therefore 
            \begin{equation*}
                \varepsilon \leq \mathrm{dist}(X,V) \leq C D[F] =  C\sqrt{2-2\sqrt{T}},
            \end{equation*}
            and rearranging this expression we obtain that $1-T \geq \tau$ for every $\varepsilon$ small enough and for some $\tau>0$ depending on $\varepsilon$.
    
             Let $F$, $X$ and $T$ be as above, with ${\rm dist}(X,V)\geq \varepsilon$
            so that $1-T \geq \tau$, and consider the decomposition $\Phi = \Phi_1 + \Phi_2$, where
            \begin{equation*}
                \Phi_1(t) = \begin{cases}
                    \Phi(t), \quad &t \in [0,T] \\
                    \Phi'_{-}(T)(t-T)+\Phi(T), \quad & t \in (T,1]
                \end{cases}
            \end{equation*}
            ($\Phi'_{-}$ denotes the left derivative of $\Phi$).
            Denoting by $G_1$ and $G_2$ the corresponding Wehrl-type entropies, since both $\Phi_1$ and $\Phi_2$ are convex, we have
            \begin{equation*}
                \sup_S G - G(F) = \sup_S G_1 - G_1(F) + \sup_S G_2 - G_2(F) \geq \sup_S G_2 - G_2(F) = \sup_S G_2,
            \end{equation*}
            where the first equality holds since $G_1$ and $G_2$ both achieve their maximum on $V$ (by the Lieb--Solovej inequality) while the last equality follows from the fact that $\Phi_2(t) = 0$ for $t \leq T$.
            
             To give an estimate from below of $\sup G_2$, we consider the distribution function of $1/(1+|z|^2)^M$, that is,
            \[
            \mu_0(t)=\nu(\{z\in\C^N:\ 1/(1+|z|^2)^M>t\}).
            \]
            Its explicit expression $\mu_0(t) := (1-t^{1/M})^{N}$ for $t\in [0,1]$ (see \cite[Equation (4.3)]{NRT_wehrl_SU(N)}) is not needed here, 
            where we just observe that $\mu_0(t)>0$ for $t\in [0,1)$. 
            
            Recalling the expression of the supremum of $G_2$ \eqref{eq:expression of sup} and using the layer cake representation \cite[][Theorem 1.13]{liebloss}, since $\lim_{t\to 0^+}\Phi_2(t)=0$ we have
            \begin{align*}
                \sup G_2 &= \int_{\C} \Phi_2 \left( \dfrac{1}{(1+|z|^2)^M} \right) \, d\nu(z)\\
                         & =  \int_0^1 \Phi'_2(t) \mu_0(t) \, dt \\
                         &= \int_T^1 (\Phi'(t) - \Phi'_{-}(T)) \, \mu_0(t) \, dt \\
                         & \geq \int_{1-\tau}^1 (\Phi'(t) - \Phi'_{-}(1-\tau)) \, \mu_0(t) \, dt \\
                         & = c \geq \dfrac{c}{2}\cdot 2(1-\sqrt{T}) = \frac{c}{2} D[F]^2 \geq c' \, \mathrm{dist}(X,V)^2,
            \end{align*}
where $c,c'>0$ because by assumption $\Phi$ is strictly convex in an interval of positive length contained in $(1-\tau,1)$.
             This concludes the proof. 
    \end{proof}
We now establish the desired stability estimate near $V$, for which the following lemma
plays a crucial role (we alert the reader that some of the computations are deferred to the appendix to maintain continuity of reading).
    \begin{lem}\label{lem:second differential is negative definite}
        Let $\Phi \colon [0,1] \to \R$ be a convex function of class $C^2([0,1])$ that is strictly convex in some interval $(a,1)$ with $a \in (0,1)$. Then, the second differential of $G$ at $X_0 \in V$ is negative  semi-definite on $T_{X_0}S$ and negative definite on the orthogonal space $(T_{X_0}V)^{\perp}$, that is
        \begin{itemize}
            \item $\mathrm{d}^2_{X_0}G(Y) \leq 0 $ for every $Y \in T_{X_0} S$;
            \vspace{2mm}
            \item $\mathrm{d}^2_{X_0}G(Y) \leq - c|Y|^2$ for every $Y \in (T_{X_0}V)^{\perp}$,
        \end{itemize}
        where $c>0$.
    \end{lem}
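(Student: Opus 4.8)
The plan is to compute $\mathrm{d}^2_{X_0}G(Y)$ explicitly by differentiating the integral expression \eqref{eq G(X)} twice along a curve on $S$ through $X_0$ in the direction $Y$, and then to analyze the resulting quadratic form separately on $T_{X_0}V$ and on $(T_{X_0}V)^\perp$. First I would parametrize the normalized polynomial near $X_0$ as $F_t = \sum_\alpha X_\alpha(t) e_\alpha$ with $X(0)=X_0$ and $X'(0)=Y \in T_{X_0}S$, so that $F_0 \equiv 1$ (the reproducing kernel at $w=0$). Writing $u(z) = |F(z)|^2/(1+|z|^2)^M$, at $X_0$ we have $u_0(z) = (1+|z|^2)^{-M}$, and the second differential of $G$ decomposes via the chain rule into two pieces: one involving $\Phi''(u_0)$ times $(\partial_t u)^2$, and one involving $\Phi'(u_0)$ times $\partial_t^2 u$. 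The first piece is manifestly nonpositive only after the right bookkeeping, while the second requires using that $X(t)$ stays on the sphere (hence the constraint $\mathrm{Re}\langle X, X'\rangle = 0$ and the appearance of $|X'|^2$ terms in $X''$). I expect the computation of $\partial_t u$ and $\partial_t^2 u$ at $X_0$ to be the bulk of the work, and since $e_0 \equiv 1$ only the monomials with $|\alpha|\geq 1$ contribute to the first-order variation of $|F|^2$.

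The key structural step is to exploit the orthogonality built into the monomial basis together with the explicit integrals $\int_{\C^N} \overline{e_\alpha(z)} e_\beta(z)(1+|z|^2)^{-M}\,d\nu(z)$, which by the reproducing property and the normalization \eqref{eq:norm on PM} reduce to $\delta_{\alpha\beta}/d$. This should make the quadratic form diagonal in the components $Y_\alpha$, grouped by degree $|\alpha|$, so that the sign analysis splits cleanly: components with $|\alpha|\leq 1$ (the tangent directions $T_{X_0}V$) and those with $|\alpha|\geq 2$ (the normal directions $(T_{X_0}V)^\perp$) contribute with different coefficients. On $(T_{X_0}V)^\perp$ the strict convexity of $\Phi$ near $1$ enters: since $u_0(z)$ ranges up to $1$ near $z=0$ and $\Phi''>0$ on $(a,1)$, the $\Phi''(u_0)(\partial_t u)^2$ term produces a strictly negative contribution whose size is controlled from below by $|Y|^2$, giving the claimed definiteness with an explicit $c>0$.

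The main obstacle I anticipate is correctly assembling the second-order term $\partial_t^2 u$ at $X_0$, because the sphere constraint forces $\mathrm{Re}\langle X_0, X''(0)\rangle = -|Y|^2$, which feeds a nontrivial correction into the $\Phi'(u_0)\,\partial_t^2 u$ piece; one must verify that, after integrating against $d\nu$ and using $\int u_0\,d\nu = 1/d$ together with the Lieb--Solovej inequality guaranteeing $\mathrm{d}^2_{X_0}G \leq 0$ on all of $T_{X_0}S$, these corrections combine so that the tangential directions give exactly zero (as they must, since $G$ is constant on $V$) and the normal directions retain a strictly negative quadratic form. A clean way to organize this is to integrate the $\Phi'(u_0)$ term by parts or to use the identity $\int \Phi'(u_0)\,\partial_t^2 u\, d\nu$ rewritten via the known first variation, reducing everything to moments of $u_0$ against $\Phi''$; I would defer the explicit evaluation of these moments (the integrals $\int u_0^k \Phi''(u_0)\,d\nu$ and the combinatorial factors $\binom{M}{\alpha}$) to the appendix, as the statement of the lemma and the surrounding text already signal. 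The final step is to observe that the negative definiteness on $(T_{X_0}V)^\perp$ holds at the single point $X_0$, and the transitivity of the isometric $SU(N+1)$-action on $V$ (Remark \ref{remark: SU(N) invariance}) then transfers it uniformly to every point of $V$ with the same constant $c$.
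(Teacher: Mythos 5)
Your skeleton matches the paper's proof (differentiate $G$ along a curve through $X_0$, use the sphere constraint $\mathrm{Re}\,X_0''(0)=-\sum_\alpha|Y_\alpha|^2$, diagonalize the quadratic form in the coefficients $Y_\alpha$, transfer to all of $V$ by the $SU(N+1)$-action), but the sign analysis at the heart of the lemma is backwards, and that is precisely the step that makes the lemma nontrivial. You claim the negative definiteness on $(T_{X_0}V)^{\perp}$ is produced by the term $\int \Phi''(u_0)(\partial_t u)^2\,d\nu$ because $\Phi''>0$ on $(a,1)$. That term is pointwise \emph{nonnegative} (convexity gives $\Phi''\geq 0$ and $(\partial_t u)^2\geq 0$), so it pushes the Hessian in the wrong direction: it is the term that must be dominated, not the source of negativity. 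The only negative contribution comes from $\int \Phi'(u_0)\,\partial^2_{tt}u\,d\nu$ through the constraint term $\mathrm{Re}\,X_0''(0)=-\sum_\alpha |Y_\alpha|^2$, and one must show that it beats the nonnegative $\Phi''$ term. In the paper this is done by integrating the $\Phi'$ piece by parts (via Lemma \ref{lem:primitive}), after which each diagonal coefficient takes the form $b_\alpha = N\tilde c_\alpha^{\,2}\int_0^\infty \Phi''\bigl((1+s)^{-M}\bigr)(1+s)^{-2M-N-1}\bigl[s^{|\alpha|+N-1}-\tfrac{M}{A_{M,N,|\alpha|}}\sum_{j=N}^{|\alpha|+N-1}\binom{M+N}{j}s^j\bigr]\,ds$, and everything rests on the pointwise fact that the bracket vanishes identically for $|\alpha|=1$ and is strictly negative for $|\alpha|\geq 2$, which hinges on the combinatorial identity $\tfrac{M}{A_{M,N,|\alpha|}}\binom{M+N}{|\alpha|+N-1}=\tfrac{M}{M-|\alpha|+1}>1$. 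Nothing in your proposal supplies this cancellation, and no soft argument can: for $|\alpha|=1$ the positive and negative contributions cancel \emph{exactly}, so the strict inequality for $|\alpha|\geq 2$ is a genuinely delicate dominance, not something that follows from strict convexity near $1$; likewise, invoking the Lieb--Solovej inequality only yields $\mathrm{d}^2_{X_0}G\leq 0$ (your first bullet) and can never give the uniform constant $c>0$ in the second bullet.

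There is also a secondary gap in the diagonalization step. The orthogonality relations you cite, $\int \overline{e_\alpha}\,e_\beta\,(1+|z|^2)^{-M}\,d\nu = \delta_{\alpha\beta}/d$, are not sufficient, because the integrals appearing in the Hessian carry the weights $\Phi''(u_0)(1+|z|^2)^{-2M}$ and $\Phi'(u_0)(1+|z|^2)^{-M}$ with $\Phi$ an arbitrary convex function. What is needed is orthogonality of distinct monomials against \emph{every} radial weight, i.e.\ on every sphere $\{|z|=\rho\}$, together with the shell-by-shell vanishing of the holomorphic-type terms $\mathrm{Re}\bigl(Y_\alpha Y_\beta e_\alpha e_\beta\bigr)$ (harmonicity); this is exactly what the paper's polar-coordinate argument provides. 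This second issue is fixable by the same mechanism; the missing dominance argument in the first paragraph is the real gap.
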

    \begin{proof}
        By $SU(N+1)$ invariance (see Remark \ref{remark: SU(N) invariance}), it suffices to consider the point $X_0 = (1,0,\ldots,0) \in V$. Consider a smooth curve $X \colon (-\varepsilon,\varepsilon) \to S$ such that $X(0) = X_0$ and let $Y = (Y_{\alpha})_{|\alpha| \leq M} = X'(0) \in T_{X_0}S$.  Since $X(0) = (1,0,\ldots,0)$ and $\mathrm{Re}Y_0 = 0$ (see Section \ref{subsec:differential geometry}),
        \begin{align*}
            \mathrm{d}^2_{X_0}G(Y) :=& \frac{d^2}{dt^2}G(X(t))|_{t=0} =\int_{\C^N} \Phi''\left(\dfrac{1}{(1+|z|^2)^M}\right) \dfrac{(2\mathrm{Re} \sum_{\alpha \neq 0} Y_{\alpha}e_{\alpha}(z))^2}{(1+|z|^2)^{2M}} d\nu(z)\\
                                                  +&\!\int_{\C^N} \!\Phi' \! \left(\!\dfrac{1}{(1+|z|^2)^M}\!\!\right) \! \dfrac{2\mathrm{Re} \sum_{|\alpha| \leq M} X''_{\alpha}(0)e_{\alpha}(z) + 2 |\sum_{|\alpha| \leq M} Y_{\alpha} e_{\alpha}(z)|^2}{(1+|z|^2)^{M}} \, d\nu(z).
        \end{align*}
        We point out that the term $\mathrm{Re} \sum_{|\alpha| \leq M} X''_{\alpha}(0)e_{\alpha}(z)$ can be replaced with just $\mathrm{Re}X_0''(0)$ (the pedex 0 stands for the multi-index in $\N^N$ with 0 in all components). For, passing to polar coordinates and integrating in 
        the angular component first, one sees that the integrals of $\mathrm{Re} \, e_{\alpha}(z)$ vanish for $\alpha \neq 0$, since $e_{\alpha}(0)=0$ and $e_\alpha(z)$  (being the real
        part of a holomorphic function) is harmonic. We also observe that 
        \[
        \mathrm{Re}X_0''(0) = -\sum_{|\alpha| \leq M} |Y_{\alpha}|^2, 
        \]
        as can be seen by differentiating twice the identity $|X(t)|^2=1$ and evaluating at $t=0$.
        Concerning the other terms, we have
    \begin{align*}
    (2\mathrm{Re} \sum_{\alpha \neq 0} Y_{\alpha}e_{\alpha}(z))^2 &= 2\sum_{\alpha \neq 0} |Y_{\alpha}|^2 |e_{\alpha}(z)|^2 + 2 \mathrm{Re} [\sum_{\alpha,\beta\not=0} Y_{\alpha} Y_{\beta} e_{\alpha}(z) e_{\beta}(z)\\
    &+\sum_{\alpha \neq \beta; \alpha,\beta\not=0} Y_{\alpha}\overline{Y_{\beta}} e_{\alpha}(z) \overline{e_{\beta}(z)}].
     \end{align*}
Again, using polar coordinates one sees that all the terms on the right-hand side, with the
exception of the first, give a zero contribution in the integral (the term $\textrm{Re}\, Y_\alpha \overline{Y_\beta}e_\alpha(z) \overline{e_\beta(z)}$ is not harmonic in general, 
yet its integral on every shell is zero
since $\alpha\not=\beta$). Similarly,
    \[
|\sum_{|\alpha| \leq M} Y_{\alpha}e_{\alpha}(z)|^2 = \sum_{|\alpha| \leq M} |Y_{\alpha}|^2 |e_{\alpha}(z)|^2 + \sum_{\alpha \neq \beta} Y_{\alpha} \overline{Y_{\beta}} e_{\alpha}(z)\overline{e_{\beta}(z)}
\]
and, again, the integral of the last sum is zero.
  
   Thus, since $|e_0(z)|^2=1$ we are left with
  \begin{equation}\label{eq:second differential in compact form}
      \dfrac{1}{2} \mathrm{d}^2_{X_0}G(Y) = \sum_{\alpha \neq 0} b_{\alpha} |Y_{\alpha}|^2
  \end{equation}
  where
  \begin{align*}
            b_{\alpha} = \int_{\C^N} \Phi''\left(\dfrac{1}{(1+|z|^2)^M}\right)& \dfrac{|e_{\alpha}(z)|^2}{(1+|z|^2)^{2M}} \, d\nu(z)\\
            &+\int_{\C^N} \Phi'\left(\dfrac{1}{(1+|z|^2)^M}\right) \dfrac{|e_{\alpha}(z)|^2-1}{(1+|z|^2)^{M}} \, d\nu(z).
        \end{align*}
        So, to prove the lemma, we just need to prove that every coefficient $b_{\alpha}$ is nonnegative for $|\alpha|=1$ and strictly negative for $2 \leq |\alpha| \leq M$. To do so, we pass to the polar coordinates $z = \rho \omega$, where $\rho\geq0$ and  $\omega \in S^{2N-1} \subset \C^N$ (unit sphere in $\C^N$), so that
        \[
        d\nu=\frac{c_N\rho^{2N-1}}{(1+\rho^2)^{N+1}}\, d\rho\,d\omega,
        \]
        where $c_N$ is the normalization constant for the measure $d\nu$ in \eqref{eq dnu}. 
        Observe that $1/c_N$ is the volume of the unit ball in $\C^N$ (that is, the real $2N$-ball), therefore $c_N/(2N) d\omega$ is the normalized surface measure on $S^{2N-1}$. 
        For ease of notation we let 
        \begin{equation}\label{eq:definition c alfa tilde}
            \tilde{c}_{\alpha}^2 \coloneqq c_{\alpha}^2 \int_{S^{2N-1}} |\omega^{\alpha}|
            ^2 \dfrac{c_N}{2N} \, d\omega,
        \end{equation}
        whose explicit expression can be found below in Lemma \ref{lemma:expression of c alfa tilde}. 
        
        Hence, passing to polar coordinates in the integrals defining $b_{\alpha}$, 
        after the further change of variable $\rho^2=s$ we have to study the sign of 
        \begin{equation*}
            N\!\int_0^{\infty} \Phi''\left( \! \dfrac{1}{(1+s)^M} \!\right) \! \dfrac{\tilde{c}_{\alpha}^2 s^{|\alpha|+N-1}}{(1+s)^{2M+N+1}}ds + N\!\int_0^{\infty} \Phi'\left(\!\dfrac{1}{(1+s)^M}\!\right) \!\dfrac{\tilde{c}_{\alpha}^2 s^{|\alpha|+N-1}-s^{N-1}}{(1+s)^{M+N+1}} \, ds.
        \end{equation*}
        In view of integrating by parts the second term, 
        using Lemma \ref{lem:primitive} we can preliminarily  compute
        \begin{align*}
            \int_0^s \dfrac{\tilde{c}_{\alpha}^2 \sigma^{|\alpha|+N-1}-\sigma^{N-1}}{(1+\sigma)^{M+N+1}} \, d\sigma &= \dfrac{\tilde{c}_{\alpha}^2 (1+s)^{-(M+N)}}{A_{M,N,|\alpha|}} \sum_{j=|\alpha|+N}^{M+N}\binom{M+N}{j}s^j \\
            &- \dfrac{(1+s)^{-(M+N)}}{A_{M,N,0}} \sum_{j=N}^{M+N} \binom{M+N}{j}s^j \\
            &= - \dfrac{\tilde{c}_{\alpha}^2 (1+s)^{-(M+N)}}{A_{M,N,|\alpha|}} \sum_{j=N}^{|\alpha|+N-1} \binom{M+N}{j}s^j,
        \end{align*}
        where in the last step we used Lemma \ref{lemma:expression of c alfa tilde} (see below for the explicit expression of $A_{M,N,|\alpha|}$). From this and from the fact that $\Phi'(t)$ has finite limits at  $0^+$ and at $1^-$, it is immediate to see that the boundary term in the integration by parts vanishes and therefore we end up with      
        \begin{equation}\label{eq:expression of b_alpha}
            \begin{aligned}
               b_{\alpha} = N\int_0^{\infty} \Phi''\left(\dfrac{1}{(1+s)^M}\right)&\dfrac{\tilde{c}_{\alpha}^2}{(1+s)^{2M+N+1}} \Big[s^{|\alpha|+N-1} \\
               -&\dfrac{M}{A_{M,N,|\alpha|}} \sum_{j=N}^{|\alpha|+N-1} \binom{M+N}{j}s^j\Big] \, ds.
           \end{aligned}
        \end{equation}
                Now, we have \[
       \frac{M}{A_{M,N,1}}\binom{M+N}{N}=1,
       \]
       so for $|\alpha|=1$ the integral is 0 (which is clear since  $Y\in T_{X_0}V$ and $G$ is constant on $V$), while for $|\alpha| > 1$ we have
       \[
       \displaystyle \dfrac{M}{A_{M,N,|\alpha|}}\binom{M+N}{|\alpha|+N-1}=\dfrac{M}{M-|\alpha|+1} > 1,
       \]
       that is, the coefficient of the monomial of degree $|\alpha|+N-1$ is strictly negative. Hence, the expression in square brackets is strictly negative for every $s>0$. Finally, by assumption, $\Phi''\geq0$ and $\Phi$ is strictly convex in an interval of the type $(a,1)$, for some $a\in (0,1)$, and the desired conclusion follows. 
    \end{proof}
    
    \begin{proof}[Proof of Theorem \ref{th:main theorem on the sphere} for $\Phi$ of class $C^2$]
		Assume, in addition to the hypothesis in the statement, that $\Phi\in C^2([0,1])$. Let $\varepsilon>0$ to be fixed later. As already recalled in Section \ref{subsec:differential geometry}, every point in the $\varepsilon$-neighborhood of $V$ can be reached by a geodesic $X \colon (-\varepsilon,\varepsilon) \to S$ parametrized with unit speed such that $X(0) = X_0 \in V$, with an initial velocity $Y=X'(0) \in (T_{X_0}V)^{\perp}$ such that $|Y|=1$. In this case, for $\varepsilon$ sufficiently small and $t\in (-\varepsilon,\varepsilon)$ we have, for some $\xi\in [-|t|,|t|]$ and $c>0$,
		\begin{align*}
			\sup_{S} G-G(X(t)) &= G(X(0))-G(X(t)) \\
			&=-\dfrac{1}{2}(G \circ X)''(\xi )t^2 \\
			&\geq c t^2 = c\ {\rm dist} (X(t),V)^2,
		\end{align*}
		where we applied Lemma \ref{lem:second differential is negative definite} and the fact that the function $(G \circ X)''(t)$ is continuous, uniformly with respect to $Y\in (T_{X_0}V)^{\perp}$, with $|Y|=1$. This fixes the value of $\varepsilon$ and gives the desired estimate where ${\rm dist}(X,S)<\varepsilon$. Then, for this value of $\varepsilon$ we apply Lemma \ref{lem:estimate for points far from V} and obtain the  estimate also where $\mathrm{dist}(X,V) \geq \varepsilon$.
	\end{proof}
    
    \begin{remark}\label{remark: xlogx}
        Wehrl's original entropy is defined with $\Phi(t) = t \log t$, which is not in $C^2([0,1])$. Nevertheless, one can easily adapt the above proof to this case, using the decomposition $\Phi = \Phi_1+\Phi_2$ where
        \begin{equation*}
            \Phi_1(t) = \begin{cases}
                \dfrac{1}{2}\Phi''(a)(t-a)^2+\Phi'(a)(t-a)+\Phi(a), &t \in [0,a] \vspace{2mm}\\ 
                \Phi(t), \quad &t \in (a,1]
            \end{cases},
        \end{equation*}
        for some $a \in (0,1)$. Indeed, since $\Phi_2:=\Phi-\Phi_1$ is convex, reasoning as
        in the proof of Lemma \ref{lem:estimate for points far from V} one sees
        that it is enough to prove the desired estimate for the Wehrl-type entropy associated with $\Phi_1$, which belongs to $C^2([0,1])$ and satisfies the assumptions of Theorem \ref{th:main theorem on the sphere}.
    \end{remark}
    
	\section{Proof Theorem \ref{th:main theorem on the sphere} for an arbitrary convex function}\label{sec:proof non regular case}
	
	When $\Phi$ is not $C^2$ the previous argument breaks down and some additional work is required.
    In fact, we will approximate $\Phi$ by smooth functions and compute the second differential of the smoothed  $G$'s at points that are not in $V$, in which case the computation is more involved. The proof of the following lemma is postponed to Appendix \ref{sec:appendix}. 
     \begin{lem}\label{lem:second differential non regular case}
       Let $X_0=(1,0,\ldots,0)\in\C^d$. For every $0<a<b<1$ there exist $\varepsilon>0$ and a continuous function $h(\tau,t,Y)$, defined for $\tau\in (a,b)$, $|t|<\varepsilon$ and $Y\in (T_{X_0} V)^\perp$ with $|Y|=1$, such that the following representation  formula holds. 
        
        If $X(t,Y)$ denotes the unit-speed geodesic in $S\subset\C^d$, with $X(0,Y)=X_0$ and initial velocity $\partial_t X(0,Y)=Y$, 
         for every convex function $\Phi \in C^2([0,1])$ such that $\Phi''$ is supported in $(a,b)$  it holds \begin{equation}\label{eq:second differential at arbitrary point}
			\partial_{tt}^2 G(X(t,Y))= \int_0^1 \Phi''(\tau) h(\tau,t,Y) \, d\tau,\quad |t|<\varepsilon,\  Y\in (T_{X_0} V)^\perp,\ |Y|=1.
		\end{equation}
        Moreover, $h(\tau,0,Y)<0$ for every $\tau\in (a,b)$ and $Y\in (T_{X_0} V)^\perp$ with $|Y|=1$. 
    \end{lem}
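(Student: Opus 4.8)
The plan is to make the geodesic completely explicit and then to represent $G$ along it through a single, manifestly scalar integral, so that the kernel $h$ emerges simply by differentiating twice in $t$. The unit-speed geodesic through $X_0=(1,0,\ldots,0)$ with $X_0$-orthogonal initial velocity $Y$ is the great circle $X(t,Y)=\cos t\,X_0+\sin t\,Y$. Since $Y\in(T_{X_0}V)^\perp$ has $Y_\alpha=0$ for $|\alpha|\le 1$ (Section \ref{subsec:differential geometry}), the associated polynomial is
\[
F_t(z)=\cos t+\sin t\,P(z),\qquad P(z)=\sum_{|\alpha|\ge 2}Y_\alpha e_\alpha(z),
\]
with $\|P\|_{\P_M}^2=|Y|^2=1$ and $P(0)=0$. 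Writing $w(z)=(1+|z|^2)^{-M}$ and $u(z,t)=w(z)|F_t(z)|^2$ (so that $G(X(t,Y))=\int\Phi(u)\,d\nu$ by \eqref{eq G(X)}), the identities $\partial_tF_t=-\sin t+\cos t\,P$ and $\partial_{tt}^2F_t=-F_t$ give the pointwise-bounded expressions $\partial_tu=2w\,\mathrm{Re}(\overline{F_t}\,\partial_tF_t)$ and $\partial_{tt}^2u=2w(|\partial_tF_t|^2-|F_t|^2)$; crucially $u\in[0,1]$ and, by the reproducing-kernel bound $\sum_{|\alpha|\ge2}|e_\alpha|^2\le K(z,z)=(1+|z|^2)^M$, one has $|P|^2\le(1+|z|^2)^M$ uniformly in $Y$.

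Next I would define
\[
W(\tau,t,Y)=\int_{\C^N}\bigl(u(z,t)-\tau\bigr)_+\,d\nu(z),\qquad h(\tau,t,Y):=\partial_{tt}^2W(\tau,t,Y),
\]
and obtain the representation formula essentially for free from Taylor's formula with integral remainder: for $u\in[0,1]$,
\[
\Phi(u)=\Phi(0)+\Phi'(0)u+\int_0^1\Phi''(\tau)\,(u-\tau)_+\,d\tau,
\]
so that integrating in $d\nu$ and using $\int u\,d\nu=\|F_t\|_{\P_M}^2/d=1/d$ (constant in $t$, by \eqref{eq:norm on PM}) yields $G(X(t,Y))=\mathrm{const}+\int_0^1\Phi''(\tau)W(\tau,t,Y)\,d\tau$; differentiating twice in $t$ and under the $\tau$-integral (justified by the regularity below) gives exactly \eqref{eq:second differential at arbitrary point}. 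Note that the support assumption on $\Phi''$ means only $\tau\in(a,b)$ is ever seen.

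The real content is the regularity: that $W$ is $C^2$ in $t$ with $\partial_{tt}^2W$ jointly continuous on $(a,b)\times(-\varepsilon,\varepsilon)\times\{|Y|=1\}$. I would prove this by a perturbative non-degeneracy argument off the radial case $t=0$. At $t=0$ one has $u(z,0)=w(z)$, radial and strictly decreasing in $|z|$, with $\nabla_zu(\cdot,0)=-2Mz(1+|z|^2)^{-M-1}\ne0$ away from the origin, and level sets $\{u=\tau\}$, $\tau\in[a,b]$, that are spheres of radii in a compact subset of $(0,\infty)$. Using $u(0,t)=\cos^2t$ and the uniform bound $u(z,t)\le 2w(z)+2\sin^2t$ (from $|P|^2\le(1+|z|^2)^M$), I can fix $\varepsilon>0$ and a compact annulus $A$ avoiding the origin such that, for $|t|<\varepsilon$ and every unit $Y$, the region $\{a\le u(\cdot,t)\le b\}$ is contained in $A$ and $|\nabla_zu(\cdot,t)|\ge c_0>0$ on $A$ by uniform continuity. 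This keeps every relevant level set a compact smooth hypersurface depending $C^1$ on $(\tau,t,Y)$ via the implicit function theorem, whence $\partial_tW=\int_{\{u>\tau\}}\partial_tu\,d\nu$ and, by the transport/co-area formula,
\[
\partial_{tt}^2W=\int_{\{u(\cdot,t)>\tau\}}\partial_{tt}^2u\,d\nu+\int_{\{u(\cdot,t)=\tau\}}\frac{(\partial_tu)^2}{|\nabla_zu|}\,d\sigma,
\]
both summands being continuous in $(\tau,t,Y)$. This step --- the uniform non-degeneracy of the moving level sets and the continuity of the resulting surface integral --- is the main obstacle, and is exactly what I would relegate to the appendix.

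Finally, for the sign I would avoid any recomputation and instead match with Lemma \ref{lem:second differential is negative definite}. Evaluating \eqref{eq:second differential at arbitrary point} at $t=0$ gives $\int_0^1\Phi''(\tau)h(\tau,0,Y)\,d\tau=\mathrm d^2_{X_0}G(Y)$ for every admissible $\Phi$. By \eqref{eq:second differential in compact form}--\eqref{eq:expression of b_alpha} and the change of variables $\tau=(1+s)^{-M}$, one has $\mathrm d^2_{X_0}G(Y)=2\sum_{|\alpha|\ge2}|Y_\alpha|^2\int_0^1\Phi''(\tau)K_\alpha(\tau)\,d\tau$, where each $K_\alpha$ is an explicit continuous function that is \emph{strictly negative} on $(0,1)$ --- precisely because the bracket in \eqref{eq:expression of b_alpha} is strictly negative for all $s>0$ when $|\alpha|\ge2$. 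Thus $\int_0^1\Phi''(\tau)\bigl[h(\tau,0,Y)-2\sum_{|\alpha|\ge2}|Y_\alpha|^2K_\alpha(\tau)\bigr]d\tau=0$ for every convex $\Phi\in C^2$ with $\Phi''$ supported in $(a,b)$; since $\Phi''$ here ranges over all nonnegative continuous bumps in $(a,b)$ and both functions are continuous, the bracket vanishes identically, giving
\[
h(\tau,0,Y)=2\sum_{|\alpha|\ge2}|Y_\alpha|^2K_\alpha(\tau)<0,\qquad \tau\in(a,b),
\]
where we used $\sum_{|\alpha|\ge2}|Y_\alpha|^2=|Y|^2=1$. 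This completes the plan.
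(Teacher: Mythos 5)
Your proposal is correct and follows essentially the same route as the paper's own proof: your Taylor-with-integral-remainder representation $\Phi(u)=\Phi(0)+\Phi'(0)u+\int_0^1\Phi''(\tau)(u-\tau)_+\,d\tau$ is precisely the paper's double-integral representation of $\Phi$, your bulk-plus-surface formula for $h=\partial^2_{tt}W$ (via the transport theorem on the nondegenerate moving level sets, where the paper instead differentiates $\Phi(u)$ pointwise and then applies the coarea formula) yields the same kernel built on the same compact-annulus/nonvanishing-gradient setup, and your sign argument at $t=0$ --- identifying $h(\cdot,0,Y)$ with the explicit strictly negative kernel from Lemma \ref{lem:second differential is negative definite} by testing against all nonnegative bumps $\Phi''$ supported in $(a,b)$ --- is exactly the paper's matching argument. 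The remaining differences (the explicit great-circle parametrization $X(t,Y)=\cos t\,X_0+\sin t\,Y$, and defining $h$ intrinsically as $\partial^2_{tt}W$ so that the representation formula becomes immediate) are purely presentational, and like the paper you defer the genuinely technical point, the joint continuity of the level-set integral, to an implicit-function-theorem argument.
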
	
	\begin{proof}[Proof of Theorem \ref{th:main theorem on the sphere} in full generality]
    Let $\varepsilon>0$ be sufficiently small, to be fixed later. In view of Lemma \ref{lem:estimate for points far from V} it is sufficient to prove the estimate 
    \begin{equation}\label{eq due}
    \sup_S G-G(X)\geq c\,{\rm dist}(X,V)^2\qquad \textrm{if } {\rm dist}(X,V)<\varepsilon.
    \end{equation}
    To this end, let $0<a<b<1$ and decompose $\Phi = \Phi_1 + \Phi_2$, where $\Phi_1$ is the affine continuation of $\Phi \vert_{[a,b]}$, that is
        \begin{equation*}
            \Phi_1(t) = \begin{cases}
                \Phi'_{+}(a)(t-a)+\Phi(a), &t \in [0,a)\\
                \Phi(t), &t \in [a,b]\\
                \Phi'_{-}(b)(t-b) + \Phi(b), &t \in (b,1].
            \end{cases}
        \end{equation*}
        Then, since both $\Phi_1$ and $\Phi_2$ are convex, denoting with $G_1$ and $G_2$ the corresponding entropies we have
        \begin{align*}
            \sup_S G - G(X) = \sup_S G_1 - G_1(X) + \sup_S G_2 - G_2(X) \geq \sup_S G_1 - G_1(X),
        \end{align*}
        where in the first equality we used the fact that both $G_1$ and $G_2$ achieve their supremum at every point of $V$ (by Lieb--Solovej inequality). Therefore, it suffices to prove the estimate \eqref{eq due} for $\Phi_1$. Also, by the assumption on $\Phi$, if $b$ is sufficienly close to $1$, $\Phi_1$ will be strictly convex in some small interval contained in $[a,b]$. 
        
        Hence, we can assume that $\Phi:[0,1]\to\mathbb{R}$ is convex, affine in $[0,a]$ and in $[b,1]$, with $\Phi''$, regarded as a positive Radon measure in (0,1), non zero and supported in $[a,b]$. 
		
		Again we recall that every point of the $\varepsilon$-neighborhood of $V$ in $S$ is of the form $X(t)$, where $X \colon (-\varepsilon,\varepsilon) \to S$ is a geodesic, parametrized with unit speed,  starting from a point $X_0 \in V$ and such that $Y = X'(0) \in (T_{X_0}V)^{\perp}$. With this in mind, the inequality \eqref{eq due} can be rephrased as 
		\begin{equation}\label{eq uno}
			G(X(0)) - G(X(t)) \geq c t^2 \quad \forall t \in (-\varepsilon,\varepsilon)
		\end{equation}
		for some $c>0$, since $|t|$ is exactly the geodesic  distance from $X(t)$ to $V$. 
        
        We now approximate $\Phi$ with more regular convex functions. First, we extend $\Phi$ to a function defined on $\mathbb{R}$, affine on $(-\infty,a]$ and on $[b,+\infty)$. Then, by convolution with a standard mollifier (see \cite[][Section 6.3]{evans_gariepy}) we construct a sequence of convex functions $\Phi_n:[0,1]\to\R$ such that, possibly for a slightly larger interval $[a,b]\subset (0,1)$,
		\begin{itemize}
			\item $\Phi_n\in C^2([0,1])$;
			\item $\Phi_n \to \Phi$ uniformly in $[0,1]$;
			\item  $\Phi_n''$ has compact support in  $(a,b)$;
			\item $\Phi_n'' \to \Phi$ as (positive, finite) Radon measures on (0,1), that is, in the distribution sense.
		\end{itemize}
		We denote with $G_n$ the corresponding Wehrl-type entropies. Using dominated convergence we have
		\begin{equation*}
			G(X(0)) - G(X(t)) = \lim_{n \to \infty} \big(G_n(X(0)) - G_n(X(t))\big)
		\end{equation*}
		and applying Taylor's formula to the $G_n$'s we see that
		\begin{equation*}
			G_n(X(0)) - G_n(X(t)) = -\dfrac{1}{2}(G_n\circ X)''(\xi_n) t^2
		\end{equation*}
		for some $\xi_n$ such that $|\xi_n| \leq|t|$. Therefore, to prove the stability estimate \eqref{eq uno}, we just need to prove that $(G_n \circ X)''(\xi) < -c < 0$ for some constant $c$ independent of $n$ and for every $\xi \in (-\varepsilon,\varepsilon)$. By $SU(N+1)$ invariance (see Remark \ref{remark: SU(N) invariance}) it is sufficient to prove this when $X(0)=X_0 = (1,0,\ldots,0) \in \C^d$.
		
		From Lemma \ref{lem:second differential non regular case} (applied to a slightly larger interval $(a,b)$) we have
		\begin{equation*}
			(G_n \circ X)''(\xi) = \int_0^1 \Phi_n''(\tau) h(\tau, \xi,Y) d\tau,
		\end{equation*}
		(recall, $Y=X'(0)$) where $h(\tau,0,Y) < -c < 0$ for $\tau \in [a,b]$ and $Y\in (T_{X_0}V)^\perp$ with $|Y|=1$. However, since $h$ is continuous, possibly with a smaller constant $c$ we have, for $\varepsilon$ sufficiently small, that $h(\tau,\xi,Y) < -c < 0$ for every $\tau \in (a,b)$, $\xi \in (-\varepsilon,\varepsilon)$ and $Y\in (T_{X_0}V)^\perp$ with $|Y|=1$. Hence,  for $t\in(-\varepsilon,\varepsilon)$ we have 
		\begin{align*}
			G(X(0)) - G(X(t)) &= \lim_{n \to \infty} G_n(X(0)) - G_n(X(t)) \\
			&= \lim_{n \to \infty} -\dfrac{1}{2}(G_n \circ X)''(\xi_n)t^2\\
			&= \lim_{n \to \infty} -\dfrac{t^2}{2}\int_a^b \Phi_n''(\tau) h(\xi_n,\tau) d\tau,
            \end{align*}
            and therefore
            \begin{align*}
           G(X(0)) - G(X(t))
			&\geq c t^2\lim_{n \to \infty} \int_a^b \Phi_n''(\tau) d\tau \\
			&= c t^2\lim_{n \to \infty} \int_0^1 \Phi_n''(\tau) d\tau\\
			&= c\Phi''((0,1)) t^2 =  c' t^2
		\end{align*}       
		where $\Phi''((0,1))>0$ denotes the total mass of the measure $\Phi''$. The stability estimate \eqref{eq uno} is therefore proved and this concludes the proof. 
	\end{proof}
	
	\appendix
	\section{Technical lemmas}\label{sec:appendix}
    In this appendix, we have gathered several technical lemmas. 
	\begin{lem}\label{lem:primitive}
		Let $N \geq 1$, $M \geq 1$ and $0 \leq K \leq M$. Then, for $s\geq0$, 
        \begin{equation}\label{eq:primitive}
			\int_0^s \dfrac{\sigma^{K+N-1}}{(1+\sigma)^{M+N+1}} \, d\sigma = \dfrac{(1+s)^{-(M+N)}}{A_{M,N,K}}\sum_{j=K+N}^{M+N}\binom{M+N}{j}s^j
		\end{equation}
		where
		\begin{equation}\label{eq:constant in the primitive}
			A_{M,N,K} = (M-K+1) \binom{M+N}{K+N-1}.
		\end{equation}
	\end{lem}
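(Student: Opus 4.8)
The plan is to verify \eqref{eq:primitive} by differentiation and the fundamental theorem of calculus, rather than by an explicit antiderivative substitution. Write $P=M+N$ for brevity and denote the right-hand side of \eqref{eq:primitive} by $R(s)=A_{M,N,K}^{-1}(1+s)^{-P}Q(s)$, where $Q(s)=\sum_{j=K+N}^{P}\binom{P}{j}s^{j}$. Since the lowest power of $s$ occurring in $Q$ is $K+N\geq N\geq 1$, we have $R(0)=0$, which matches the value of the left-hand side at $s=0$. Hence it suffices to show that $R'(s)$ equals the integrand $s^{K+N-1}/(1+s)^{M+N+1}$ for every $s\geq 0$.

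Differentiating the quotient gives
\begin{equation*}
R'(s)=\frac{1}{A_{M,N,K}}\,(1+s)^{-P-1}\big[(1+s)Q'(s)-PQ(s)\big],
\end{equation*}
so the whole matter reduces to the polynomial identity $(1+s)Q'(s)-PQ(s)=A_{M,N,K}\,s^{K+N-1}$. To establish this I would split $(1+s)Q'(s)=Q'(s)+sQ'(s)$ and use the two elementary identities $(i+1)\binom{P}{i+1}=P\binom{P-1}{i}$ and $(P-j)\binom{P}{j}=P\binom{P-1}{j}$. In $Q'(s)=\sum_{j=K+N}^{P}j\binom{P}{j}s^{j-1}$, the substitution $i=j-1$ together with the first identity produces $P\sum_{i=K+N-1}^{P-1}\binom{P-1}{i}s^{i}$; in $sQ'(s)-PQ(s)=\sum_{j=K+N}^{P}(j-P)\binom{P}{j}s^{j}$, the second identity produces $-P\sum_{j=K+N}^{P-1}\binom{P-1}{j}s^{j}$, the $j=P$ term dropping out because $\binom{P-1}{P}=0$. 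These two sums run over ranges differing only in their lowest index, so they cancel term by term and leave exactly the single contribution $P\binom{P-1}{K+N-1}s^{K+N-1}$.

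It then remains only to match the constant: a short factorial computation shows
\begin{equation*}
P\binom{P-1}{K+N-1}=(M+N)\binom{M+N-1}{K+N-1}=(M-K+1)\binom{M+N}{K+N-1}=A_{M,N,K},
\end{equation*}
which is precisely \eqref{eq:constant in the primitive}. With the polynomial identity in hand we get $R'(s)=s^{K+N-1}/(1+s)^{P+1}$, and since $R(0)=0$ the fundamental theorem of calculus yields \eqref{eq:primitive} for all $s\geq 0$. The only point requiring care is the index bookkeeping in the cancellation of the two binomial sums; once the ranges are lined up correctly, everything else is a direct verification.
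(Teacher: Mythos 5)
Your proof is correct, but it takes a genuinely different route from the paper's. The paper does not verify the identity directly at all: it performs the change of variable $\tau = \sigma/(1+\sigma)$, recognizes the left-hand side as an incomplete beta function, and then cites a classical closed-form expansion for it (Dutka's historical survey, Formula 2.5). Your argument is instead a self-contained verification of the claimed antiderivative: both sides vanish at $s=0$ (using $K+N\geq 1$), and differentiating the right-hand side reduces the claim to the polynomial identity $(1+s)Q'(s)-PQ(s)=A_{M,N,K}\,s^{K+N-1}$ with $P=M+N$, which you establish through the identities $(i+1)\binom{P}{i+1}=P\binom{P-1}{i}$ and $(P-j)\binom{P}{j}=P\binom{P-1}{j}$ and a term-by-term cancellation of the two resulting sums; the surviving term is $P\binom{P-1}{K+N-1}s^{K+N-1}$, and the constant check $P\binom{P-1}{K+N-1}=\frac{(M+N)!}{(K+N-1)!\,(M-K)!}=(M-K+1)\binom{M+N}{K+N-1}=A_{M,N,K}$ goes through (I verified both the index bookkeeping and the factorial identity, including the boundary case $K=M$). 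What your approach buys is a fully elementary, reference-free proof, arguably more in keeping with the paper's stated elementary spirit; what the paper's approach buys is brevity and conceptual context, since the change of variable explains \emph{where} the formula comes from (it is the classical expansion of the incomplete beta function) rather than merely certifying that it holds.
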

	\begin{proof}
	    After changing the variable $\tau = \sigma/(1+\sigma)$ one recognizes the expression of the incomplete beta function (up to the normalization constant), for which the expression \eqref{eq:primitive} has been known for a long time (see \cite[][Formula 2.5]{dutka_incomplete_beta_functions}).
	\end{proof}
	\begin{lem}\label{lemma:expression of c alfa tilde}
        The coefficients $\tilde{c}_{\alpha}^2$ defined in \eqref{eq:definition c alfa tilde} are given by 
		\begin{equation*}
			\tilde{c}_{\alpha}^2 = \dfrac{(N-1)!M!}{(N+|\alpha|-1)!(M-|\alpha|)!} = \dfrac{A_{M,N,|\alpha|}}{A_{M,N,0}},
		\end{equation*}        
        where the numbers $A_{M,N,|\alpha|}$ are defined in \eqref{eq:constant in the primitive}.
	\end{lem}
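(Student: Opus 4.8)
The plan is to reduce the computation of $\tilde{c}_\alpha^2$ to a single spherical integral, namely the average of $|\omega^\alpha|^2$ over the unit sphere $S^{2N-1}\subset\C^N$, and to evaluate that average by a Gaussian-integral trick. Recalling from \eqref{eq:definition c alfa tilde} that $\tilde{c}_\alpha^2 = c_\alpha^2 \int_{S^{2N-1}} |\omega^\alpha|^2 \frac{c_N}{2N}\,d\omega$ with $c_\alpha^2 = \binom{M}{\alpha}$, and that (as observed right after \eqref{eq:definition c alfa tilde}) $\frac{c_N}{2N}\,d\omega$ is the normalized surface measure on $S^{2N-1}$, everything hinges on showing
\[
\int_{S^{2N-1}} |\omega^\alpha|^2\, \frac{c_N}{2N}\,d\omega = \frac{(N-1)!\,\alpha!}{(N+|\alpha|-1)!}.
\]

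To obtain this identity I would compute $\int_{\C^N} |z^\alpha|^2 e^{-\pi|z|^2}\,dA(z)$ in two different ways. On the one hand, the integrand factorizes over the coordinates, and the one-dimensional integral $\int_\C |z_j|^{2\alpha_j} e^{-\pi|z_j|^2}\,dA(z_j) = \alpha_j!/\pi^{\alpha_j}$ (an elementary polar-coordinate computation reducing to $\int_0^\infty u^{\alpha_j} e^{-u}\,du$), so the full integral equals $\alpha!/\pi^{|\alpha|}$. On the other hand, passing to polar coordinates $z=\rho\omega$ with $\rho\geq 0$ and $\omega\in S^{2N-1}$, the integral splits as a product of the radial factor $\int_0^\infty \rho^{2|\alpha|+2N-1} e^{-\pi\rho^2}\,d\rho = (N+|\alpha|-1)!/(2\pi^{N+|\alpha|})$ and the unnormalized spherical integral $\int_{S^{2N-1}} |\omega^\alpha|^2\,d\omega$. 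Comparing the two evaluations yields $\int_{S^{2N-1}} |\omega^\alpha|^2\,d\omega = 2\pi^N\alpha!/(N+|\alpha|-1)!$; taking $\alpha=0$ recovers the total surface mass $2\pi^N/(N-1)!$, and dividing produces exactly the displayed identity.

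Substituting this into the definition and using $\binom{M}{\alpha} = M!/(\alpha!\,(M-|\alpha|)!)$, the factor $\alpha!$ cancels and I obtain
\[
\tilde{c}_\alpha^2 = \frac{(N-1)!\,M!}{(N+|\alpha|-1)!\,(M-|\alpha|)!},
\]
which is the first claimed expression. It then remains to identify this with $A_{M,N,|\alpha|}/A_{M,N,0}$. This is a routine factorial manipulation: writing out $A_{M,N,K} = (M-K+1)\binom{M+N}{K+N-1}$ from \eqref{eq:constant in the primitive} for $K=|\alpha|$ and $K=0$, expanding the binomials, and simplifying the quotient, the prefactor $(M-|\alpha|+1)/(M+1)$ combines with the factorials to reproduce the same ratio.

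The only genuine content of the lemma is the spherical average, so I expect the Gaussian-integral step to be the crux, while the final identification with the ratio of the $A$'s is pure bookkeeping. The one point to keep in mind is consistency of the normalizations, i.e. that $\frac{c_N}{2N}\,d\omega$ really is the \emph{normalized} surface measure (equivalently, that $\mathrm{Area}(S^{2N-1}) = 2N/c_N$, which follows from $1/c_N$ being the volume of the unit ball in $\C^N\simeq\R^{2N}$); the $\alpha=0$ case of the two-fold Gaussian computation provides an internal check of exactly this normalization.
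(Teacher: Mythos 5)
Your proof is correct, but it takes a genuinely different route from the paper's. You evaluate the spherical average $\int_{S^{2N-1}}|\omega^\alpha|^2\,d\omega$ from scratch via the Gaussian factorization trick (computing $\int_{\C^N}|z^\alpha|^2e^{-\pi|z|^2}\,dA$ once as a product of one-dimensional integrals and once in polar coordinates), then substitute into the definition and verify the identity with $A_{M,N,|\alpha|}/A_{M,N,0}$ by factorial bookkeeping; all steps check out, including the cancellation of $\alpha!$ against $c_\alpha^2=\binom{M}{\alpha}$ and the internal check of the normalization of $\frac{c_N}{2N}\,d\omega$. The paper instead never computes the spherical integral directly: it starts from the known normalization $\|e_\alpha\|_{\P_M}=1$ of the orthonormal basis, passes to polar coordinates so that the radial and angular factors separate, evaluates the radial integral $\int_0^\infty s^{|\alpha|+N-1}(1+s)^{-(M+N+1)}\,ds = 1/A_{M,N,|\alpha|}$ by letting $s\to\infty$ in Lemma \ref{lem:primitive}, and solves for $\tilde{c}_\alpha^2$, using $dN=A_{M,N,0}$ to land immediately on the ratio form $A_{M,N,|\alpha|}/A_{M,N,0}$ (the factorial form then follows by expansion). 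Your argument is more self-contained --- it needs neither Lemma \ref{lem:primitive} nor the orthonormality of the $e_\alpha$'s, only elementary Gaussian integrals --- while the paper's is shorter given the infrastructure already in place and produces directly the $A$-ratio form, which is the one actually exploited in the integration by parts in the proof of Lemma \ref{lem:second differential is negative definite}.
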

    \begin{proof}
        Recalling the expression of the norm \eqref{eq:norm on PM} in $\P_M$ we have
        \begin{align*}
            1 &= d \int_{\C^N} \dfrac{c_{\alpha}^2|z^{\alpha}|^2}{(1+|z|^2)^{M+N+1}} c_N dA(z) \\
              &= dN \int_{S^{2N-1}} c_{\alpha}^2 |\omega^{\alpha}| \dfrac{c_N}{2N} d\omega \int_0^{\infty} \dfrac{s^{|\alpha|+N-1}}{(1+s)^{M+N+1}} ds \\
              &= \dfrac{dN}{A_{M,N,|\alpha|}}\tilde{c}_{\alpha}^2,
        \end{align*}
        where in the last equality we used \eqref{eq:primitive}. Then, recalling the explicit expression of $d$ \eqref{eq d} it is immediately seen that $dN = A_{M,N,0}$.
    \end{proof}
	\begin{proof}[Proof of Lemma \ref{lem:second differential non regular case}]
    For simplicity of notation, let us set $X(t)=X(t,Y)$. Similarly, we will omit the dependence on $Y$ of various functions below.
   
       Let $\varepsilon>0$, to be chosen later on.  Let $\Phi$ be as in statement, in particular with  $\Phi''$ compactly supported in $(a,b)$.   We have to compute the second derivative of
		\begin{equation*}
			(G \circ X)(t) = \int_{\C^N} \Phi(u(z,t)) \, d\nu(z)
		\end{equation*}
		for $|t|<\varepsilon$, where
		\begin{equation*}
			u(z,t)=u(z,t,Y)=\dfrac{|\sum_{|\alpha| \leq M} X_{\alpha}(t) e_{\alpha}(z) |^2}{(1+|z|^2)^M}
		\end{equation*}
		(recall that $X_\alpha(t)=X_\alpha(t,Y)$ will also depend smoothly on $Y$).

        Since $\int_{\C^N} u(z,t) d\nu(z) = 1$ for all $t$ and $Y$, we can suppose, without loss of generality, that $\Phi(0) = \Phi'(0) = 0$. Therefore, we have
        \begin{equation*}
            \Phi(\upsilon) = \int_0^{\upsilon} \int_0^{\sigma} \Phi''(\tau) \, d\tau \, d\sigma,
        \end{equation*}
        which allows us to rewrite
        \begin{equation*}
            (G \circ X)(t) = \int_{\C^N} \int_0^{u(z,t)} \int_0^ \sigma \Phi''(\tau) \, d\tau \, d\sigma \, d\nu(z).
        \end{equation*}
        Differentiating twice with respect to $t$ leads to
        \begin{align*}
            (G \circ X)''(t) &= \int_{\C^N} \Phi''(u(z,t)) (\partial_tu(z,t))^2 \, d\nu(z)\\
            &\qquad\qquad\qquad\qquad+ \int_{\C^N} \Big(\int_0^{u(z,t)} \Phi''(\tau) \, d\tau \Big) \partial^2_{tt} u(z,t) \, d\nu(z) \\
            &= \int_{\C^N} \Phi''(u(z,t)) (\partial_tu(z,t))^2 \, d\nu(z) \\
            &\qquad\qquad\qquad\qquad+ \int_0^1 \Phi''(\tau) \Big( \int_{\{u(\cdot,t) > \tau\}} \partial^2_{tt} u(z,t) \, d\nu(z) \Big) \, d\tau.
        \end{align*}
        Our goal is to rewrite the first integral using the coarea formula in order to obtain the representation formula \eqref{eq:second differential at arbitrary point}. Before doing so, we notice that there exist $\varepsilon>0$ and a compact subset $K \subset \C^N \setminus \{0\}$ such that, for all $|t|<\varepsilon$ and $Y\in (T_{X_0} V)^\perp$, with $|Y|=1$, 
		\begin{equation*}
			a \leq u(z,t) \leq b \implies z \in K.
		\end{equation*}
		Indeed, since for $t=0$ we have $u(z,0) = (1+|z|^2)^{-M}$ and since $X(t)$ has unit speed, we have  $|u(z,t) - (1+|z|^2)^{-M}| \leq C\varepsilon$ for $|t|<\varepsilon$ and $Y\in (T_{X_0} V)^\perp$ with $|Y|=1$, if $\varepsilon$ is small enough, for some constant $C>0$. Hence one can take 
        \[
        K = \{z\in\C^N:\ a-C\varepsilon \leq (1+|z|^2)^{-M} \leq b+C\varepsilon\}.
        \]
        Moreover, since $\nabla u(z,0) \neq 0$ for $z \in K$, up to reducing $\varepsilon$ if necessary, we can suppose that $|\nabla u(z,t)| > c > 0$ for every $z \in K$, $|t| < \varepsilon$ and $Y\in (T_{X_0} V)^\perp$, with $|Y|=1$. This allows us to use the coarea formula (\cite[Theorem 3.13 (ii)]{evans_gariepy}) and to obtain
        \begin{equation}\label{eq derivata in 0}
            (G \circ X)''(t) = \int_0^1 \Phi''(\tau) h(\tau,t) \, d\tau,
        \end{equation}
        where (cf. \eqref{eq dnu})
        \begin{align}\label{eq def h}
        h(\tau,t)=h(\tau,t,Y)=\int_{\{u(\cdot,t) = \tau\}} &\dfrac{c_N (\partial_t u(z,t))^2}{|\nabla u(z,t)|^2 (1+|z|^2)^{N+1}} \, d\mathcal{H}^{2N-1}(z) \nonumber\\
        &\qquad\qquad\qquad +\int_{\{u(\cdot,t) > \tau\}} \partial^2_{tt} u(z,t) \, d\nu(z)\nonumber
        \end{align}
        which is the desired representation --- here $d\mathcal{H}^{2N-1}$ denotes the $(2N-1)$-dimensional Hausdorff measure. Moreover, since for  $\tau\in (a,b)$, $|t|<\varepsilon$ and $Y \in (T_{X_0}V)^{\perp}$, with $|Y|=1$, the equality $u(z,t)=\tau$ implies $z\in K$ and $|\nabla u(z,t)| > c > 0$, it is easy to see that $h$ is continuous, as claimed.

        To conclude the proof, we need to show that $h(\tau,0)$ is strictly negative for every $\tau \in (a,b)$. We observe that setting $t=0$ in \eqref{eq derivata in 0} we obtain
        \[
        (G \circ X)''(0) = \int_0^1 \Phi''(\tau)h(\tau,0)\, d\tau. 
        \]
        However, we have already computed $(G \circ X)''(0)$ in the proof of Lemma \ref{lem:second differential is negative definite}. In fact, combining equations \eqref{eq:second differential in compact form} and \eqref{eq:expression of b_alpha}, with $Y = (Y_{\alpha})_{|\alpha| \leq M} = X'(0)$  and with the change of variable $\tau = (1+s)^{-M}$ we have obtained that
        \begin{align*}
            (G \circ X)''(0) = 
            \int_0^1 \Phi''(\tau) \tilde{h}(\tau) d \tau,
        \end{align*}
        for a continuous function $\tilde{h}(\tau)=\tilde{h}(\tau,Y)$
  \footnote{Specifically, with
        \[
\tilde{h}(\tau,Y)=\sum_{\alpha \neq 0} 2|Y_{\alpha}|^2 N \tilde{c}_{\alpha}^2 \tau^{1+\frac{N}{M}} \Big[\dfrac{(1-\tau^{-1/M})^{|\alpha|+N-1}}{M}-\dfrac{1}{A_{M,N,|\alpha|}}\sum_{j=N}^{|\alpha|+N-1} \binom{M+N}{j} (1-\tau^{-1/M})^j\Big].
        \]
        }, which was found to be  strictly negative for $\tau \in (0,1)$ whenever $Y \in (T_{X_0}V)^{\perp}$. Therefore, we just need to prove that $\tilde{h}(\tau) = h(\tau,0)$ for every $\tau \in (a,b)$. This is clear, because the equality
        \begin{equation*}
            \int_0^1 \Phi''(\tau) \tilde{h}(\tau) \, d\tau = \int_0^1 \Phi''(\tau) h(\tau,0) \, d\tau
        \end{equation*}
        holds for every convex function $\Phi\in C^2([0,1])$ whose second derivative is compactly supported in $(a,b)$ and, therefore, holds for every positive Radon measure in $(a,b)$, which implies $h(\tau,0) = \tilde{h}(\tau)$ for $\tau\in (a,b)$ and $Y \in (T_{X_0}V)^{\perp}$, with $|Y|=1$.
	\end{proof}
    \section*{Acknowledgments}
    F.~N.~is a Fellow of the \textit{Accademia delle Scienze di Torino} and a member
of the \textit{Societ\`a Italiana di Scienze e Tecnologie Quantistiche (SISTEQ)}.


\end{document}